\documentclass[sigplan,nonacm,screen]{acmart}

\usepackage[ruled, vlined, linesnumbered]{algorithm2e}
\usepackage{graphicx}
\usepackage{textcomp}
\usepackage{xcolor}
\usepackage{tikz}
\usetikzlibrary{quantikz2,arrows.meta,backgrounds,calc,fit,positioning}
\usepackage{hyperref}
\usepackage{makecell}
\usepackage{diagbox}
\usepackage{adjustbox}
\usepackage{cleveref}
\crefname{theorem}{theorem}{theorems}
\Crefname{theorem}{Theorem}{Theorems}
\crefname{lemma}{lemma}{lemmas}
\Crefname{lemma}{Lemma}{Lemmas}
\crefname{proposition}{proposition}{propositions}
\Crefname{proposition}{Proposition}{Propositions}
\crefname{corollary}{corollary}{corollaries}
\Crefname{corollary}{Corollary}{Corollaries}
\AddToHook{env/theorem/before}{\crefalias{theorem}{theorem}}
\AddToHook{env/lemma/before}{\crefalias{theorem}{lemma}}
\AddToHook{env/proposition/before}{\crefalias{theorem}{proposition}}
\AddToHook{env/corollary/before}{\crefalias{theorem}{corollary}}
\usepackage{multirow}
\usepackage{enumitem}
\usepackage{soul}
\usepackage[most]{tcolorbox}
\usepackage{fancyhdr}
\usepackage{array}
\usepackage{pifont}
\usepackage{mwe}
\usepackage{orcidlink}
\usepackage{fontawesome5}
\usepackage[caption=false,font=small,labelfont=rm,textfont=rm]{subfig}

\definecolor{SkyBlue}{RGB}{100, 180, 220}
\definecolor{TextHighlightBlue}{RGB}{72,120,208}

\newcommand{\note}[1]{{\color{TextHighlightBlue} #1}}
\newcommand{\appendixnote}[1]{{#1}}
\newcommand{\ZY}[1]{{\color{purple}[ZY: #1]}}

\newcommand{\dquote}[1]{``#1''}
\newcommand{\code}{\texttt}

\newcommand{\phoenix}{\textsc{Phoenix}}

\newcommand{\symphony}{\textsc{Symphony}}

\newcommand{\qiskit}{\textsc{Qiskit}}
\newcommand{\tket}{\textsc{Tket}}
\newcommand{\paulihedral}{\textsc{Paulihedral}}

\newcommand{\tetris}{\textsc{Tetris}}
\newcommand{\quclear}{\textsc{Quclear}}
\newcommand{\rustiq}{\textsc{Rustiq}}
\newcommand{\pcoast}{\textsc{Pcoast}}

\newcommand{\totalWeight}{w_\mathrm{tot.}}

\definecolor{ElegantGrayBack}{RGB}{242, 242, 242} % 优雅的浅灰背景
\definecolor{ElegantGrayFrame}{RGB}{105, 105, 105} % 配套的深灰边框

\newtcolorbox{takeaways}[1][]{
    colback=ElegantGrayBack, % 背景色
    colframe=ElegantGrayFrame, % 边框色
    fonttitle={\bfseries},
    title={Takeaways},       % 默认标题
    arc=1.5mm,                 % 圆角半径
    boxrule=0.8pt,           % 边框粗细
    left=5pt, right=5pt,     % 左右边距
    top=3pt, bottom=3pt,     % 上下边距
    enhanced,                % 启用增强样式
    breakable,               % 允许分页
    #1                       % 可选参数（可覆盖默认样式）
}

\SetAlFnt{\small}

\begin{document}

% \title{
% Global Synthesis for Hamiltonian Simulation Using Binary Symplectic Form
% \thanks{{\faUserFriends} These authors contributed equally to this work.}
% }

\title{
% Symphony: 
Efficient Compilation for Hamiltonian Simulation via Global Binary Symplectic Form Simplification
% \thanks{{\faUserFriends} These authors contributed equally to this work.}
}

\author{Zhaohui Yang}
\authornote{Both authors contributed equally to this work.}
\orcid{0000-0003-4698-4378}
% \author{G.K.M. Tobin}
% \authornotemark[1]
% \email{webmaster@marysville-ohio.com}
\affiliation{%
  \institution{The Hong Kong University of\\ Science and Technology}
  \city{Hong Kong}
  % \state{Hong Kong}
  \country{}
}
\email{zhaohui@ucsb.edu}

\author{Yuwei Han}
\authornotemark[1]
\orcid{0009-0000-2522-5909}
\affiliation{
  \institution{Tsinghua University}
  \city{Beijing}
  \country{China}
}
\email{hanyw22@mails.tsinghua.edu.cn}

\author{Ruiyun Zhang}
\orcid{0009-0001-4878-4372}
\affiliation{
  \institution{The Hong Kong University of\\ Science and Technology}
  \city{Hong Kong}
  \country{}
}
\email{eeryzhang@ust.hk}

\author{Dawei Ding}
\orcid{0000-0001-7728-5380}
\affiliation{
  \institution{Fudan University}
  % \institution{Shanghai Institute for Mathematics and Interdisciplinary Sciences}
  \city{Shanghai}
  \country{China}
  \country{}
}
\email{daweiding@fudan.edu.cn}

\author{Jianxin Chen}
\orcid{0000-0002-9365-776X}
\affiliation{
  \institution{Tsinghua University}
  \city{Beijing}
  \country{China}
}
\email{chenjianxin@tsinghua.edu.cn}

\author{Yuan Feng}
\orcid{0000-0002-3097-3896}
\affiliation{
  \institution{Tsinghua University}
  \city{Beijing}
  \country{China}
}
\email{yuan_feng@tsinghua.edu.cn}

\author{Yuan Xie}
\orcid{0000-0003-2093-1788}
\affiliation{
  \institution{The Hong Kong University of\\ Science and Technology}
  \city{Hong Kong}
  \country{}
}
\email{yuanxie@ust.hk}

\renewcommand{\shortauthors}{Zhaohui Yang et al.}

%%%%%%%%%%%%%%%%%%%%%%%%%%%%%%%%%%%%%%%%
%%%%%%%% -- PAPER CONTENT STARTS -- %%%%%%%%%

\begin{abstract}
    
Hamiltonian simulation is a core quantum workload, underpinning variational quantum algorithms and Trotterized time evolution. Such programs are expressed as Pauli exponential sequences, exhibiting structural patterns that are highly amenable to high-level synthesis and optimization.
% Such programs are expressed as variably arranged sequences of Pauli exponentials, with structural patterns suitable for high-level synthesis and optimization. 
Existing compilers, however, fail to fully unlock the optimization potential of their global algebraic structure, even when employing advanced graph- or tableau-based methods. 

We present \symphony, a holistic compilation approach built on the binary symplectic form (BSF) representation of Pauli strings.
% \symphony\ aggressively optimizes both two-qubit gate count and circuit depth, primarily targeting near-term and early fault-tolerant quantum devices. 
Unlike prior group-wise BSF simplification and path-based Pauli network synthesis, \symphony\ applies generalized controlled-Pauli Clifford transformations directly to a global BSF tableau, adaptively reducing active Pauli rows and emitting eligible two-qubit blocks other than single-qubit rotations in a forward Clifford frame. Following algebraic simplification, \symphony\ performs a causality-preserving block rescheduling heuristic that respects frame-induced dependencies while exposing extensive two-qubit block parallelism opportunities. This streamlined compilation style comprehensively exploits simultaneous simplification and commutativity opportunities, achieving efficient global optimization without relying on computationally expensive heuristics or long-horizon searches. Across the generic Hamiltonian simulation benchmarks in HamLib, \symphony\ achieves average reductions of 59\% in two-qubit gate count and 91\% in circuit depth. It Pareto-dominates prior state-of-the-art compilers, requiring 1.14--1.58$\times$ fewer two-qubit gates and especially shrinking two-qubit circuit depth by a substantial factor of 1.87--5.67$\times$ on average. 
\end{abstract}

% \keywords{Quantum Computing, Hamiltonian Simulation, Quantum Compilation, Compiler, Global Optimization}

\maketitle

\section{Introduction}\label{sec:introduction}

Quantum computing promises transformative speedups for classically intractable problems, ranging from prime factorization~\cite{shor1994algorithms} and linear systems~\cite{harrow2009quantum} to the simulation of complex physical systems~\cite{feynman1982simulating,lloyd1996universal}. Hamiltonian simulation has emerged as one of the most compelling applications of quantum computers~\cite{feynman1982simulating,lloyd1996universal,georgescu2014quantum}, underpinning workloads in quantum chemistry~\cite{mcardle2020quantum,cao2019quantum}, condensed-matter and many-body physics~\cite{altman2021quantum}, materials science~\cite{bauer2020quantum}, and combinatorial optimization~\cite{abbas2024challenges}. Crucially, its centrality persists across algorithmic paradigms: NISQ formulations such as Trotterized time evolution~\cite{childs2018toward} and variational frameworks including VQE~\cite{peruzzo2014variational} and QAOA~\cite{farhi2014quantum} share the same computational core with fault-tolerant approaches based on product formulas, qubitization, and quantum signal processing~\cite{su2021fault,low2019hamiltonian}.

Regardless of the algorithmic paradigm, compiling a Hamiltonian simulation workload reduces to a common structural problem: constructing efficient quantum circuits to approximate a desired unitary evolution under a system Hamiltonian. Hamiltonians are represented as weighted sums of Pauli strings while unitary evolutions are required to be lowered into or approximated by sequences of parameterized Pauli rotations $\left\{e^{-i \theta_k P_k}\right\}$~\cite{dalzell2023quantum}. Despite originating from diverse applications and algorithmic formulations, these rotations expose a common compilation interface that we refer to as the Pauli intermediate representation (\emph{Pauli-IR}). % throughout this paper.
% Efficiently synthesizing such Pauli-IR sequences is therefore a recurring compilation problem across Hamiltonian-simulation workloads.

\begin{figure}[tbp]
    \centering
    \subfloat[Na\"ive synthesis in the CNOT-tree style]{
        \includegraphics[width=\columnwidth]{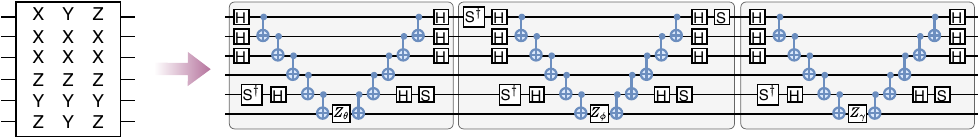}
        \label{fig:naive-synthesis}
    }
    \hfil
    \subfloat[Simultaneous simplification through Clifford conjugation]{
        \includegraphics[width=\columnwidth]{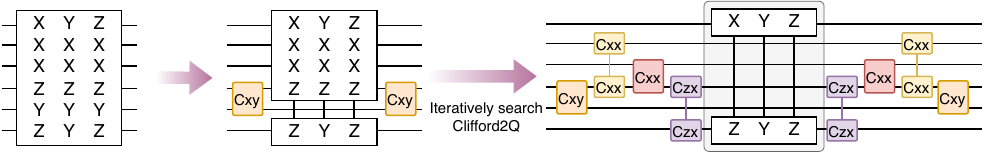}
        \label{fig:simultanous-simplification}
    }
    \caption{(a) Na\"ive synthesis vs. (b) simultaneous simplification of Pauli exponentials.}
    % \caption{Pauli exponential synthesis approach comparison.}
    \label{fig:naive-vs-simultaneous-synthesis}
\end{figure}

Despite this compact high-level representation, straightforward synthesis of Pauli-IRs produces circuits with prohibitive gate counts and depths. Conventionally, each Pauli rotation or Pauli exponential $e^{-i\theta P}$ is realized as $C\, R_Z(2\theta)\, C^\dagger$, where the Clifford $C$ comprises single-qubit basis changes followed by a CNOT ladder or CNOT parity tree, such that a weight-$w$ Pauli exponential requires $2(w-1)$ CNOT gates for synthesis~\cite{nielsen2010quantum}. However, a na\"ive, term-by-term synthesis inherently ignores the global algebraic commonalities of the Hamiltonian. It forces the circuit to repeatedly compute and uncompute overlapping parity structures rather than leveraging the shared relationships across the Pauli sequence. Such a synthesis approach is exemplified in \Cref{fig:naive-synthesis}.

Prior work has explored various high-level synthesis and optimization strategies, ranging from localized inter-block gate cancellation~\cite{li2022paulihedral,jin2024tetris,li2025pauliforest} and structural abstraction rewriting~\cite{van2020circuit,mukhopadhyay2023synthesizing,cowtan2019phase,paykin2023pcoast,liu2025quclear} to global algebraic simplifications over binary symplectic form (BSF) tableaux~\cite{goubault2024faster,yang2025phoenix}. However, the most closely related tableau- and Pauli-network-based approaches remain constrained by strict adherence to a \dquote{first-diagonalize, then-search-CNOT} methodology~\cite{goubault2024faster,kuo2026unified}, non-holistic grouping strategies and computationally expensive heuristic metrics~\cite{yang2025phoenix}, or long-horizon searches~\cite{machiya2026monteq,dubal2025paulinetwork}. Together, these constraints leave open a streamlined synthesis flow that operates on a unified active BSF tableau, guarantees monotone target-row descent, adaptively emits weight-$2$ blocks, and reschedules them under frame-induced dependencies.

% In this work, we propose \symphony, an efficient compilation framework for generic Hamiltonian simulation programs built on the holistic algebraic simplification of the BSF tableau representation of Pauli-IRs. While adopting the same Clifford-based search space used in \citet{yang2025phoenix}, we instead process all Pauli-IRs within a single tableau. Furthermore, we perform aggressive two-qubit block emission rather than single-qubit rotation peeling, which carries a higher risk of circuit degradation. By delving into the algebraic structures of Pauli-IRs and Clifford transformations, we exhaustively exploit opportunities for simultaneous simplification and commutativity optimization. Even guided by a lightweight, localized selection of appropriate Cliffords, \symphony\ proves to be highly-effective global optimization approach without the need for long-horizon searches. Our key contributions are summarized as follows:
In this work, we propose \symphony, an efficient compilation framework for generic Hamiltonian simulation programs. To overcome the limitations of existing tableau-based approaches, \symphony\ abandons fragmented grouping and restrictive diagonalization paradigms in favor of a holistic algebraic simplification over a unified BSF tableau. Even guided by lightweight, localized Clifford selections, \symphony\ effectively unlocks simultaneous simplification without resorting to complex heuristics or long-horizon searches. Our key contributions are summarized as follows:
% \begin{itemize}[leftmargin=*, topsep=2pt, itemsep=2pt, parsep=2pt]
\begin{itemize}[leftmargin=*]
    % \item \emph{Holistic tableau simplification formulation.} 

    \item We formulate Pauli-IR compilation as holistic BSF tableau simplification. By evaluating CNOT-equivalent controlled-Pauli Cliffords against all unresolved Pauli strings, we unlock global optimization opportunities often obscured by prior grouping or diagonalization methods.

    % \item \emph{Efficient forward-frame synthesis.}
    \item We develop a guaranteed-progress, forward-frame algebraic simplification algorithm that eliminates tableau weights row-by-row. Our greedy Clifford selection maximizes whole-tableau weight reduction, proving more effective than complex heuristics or legacy path-based synthesis.

    % while adaptively emitting Pauli-IR blocks of weight at most two
    \item During simplification, we adaptively emit two-qubit blocks rather than just single-qubit rotations, mitigating circuit degradation risks while exposing additional parallelism for downstream scheduling.

    % This strategy inherently captures simultaneous simplification across heterogeneous-weight Pauli strings, proving significantly more effective and intuitive than complex heuristic-based methods or legacy path-based Pauli network synthesis.
    % \item \emph{Commutativity-optimized ASAP scheduling.} 
    \item We introduce a causality-preserving ASAP block rescheduling heuristic inspired by graph edge-coloring. By exploiting inherent sequence flexibility and canonical commutation relations, it preserves frame-induced precedences while maximizing two-qubit block parallelism.
    
    % exposing rich two-qubit parallelism to aggressively optimize circuit depth.

    % We minimize circuit depth by mapping the peeled Pauli-IRs and synthesized Cliffords to a dependency graph for as-soon-as-possible (ASAP) scheduling. By leveraging the controlled-Pauli Clifford formalism, we expose canonical commutation relations among Pauli operators and Cliffords, enabling relaxed commutativity constraints and deeper optimization during scheduling.
\end{itemize}

% \begin{itemize}[leftmargin=*]
%     \item We formulate Pauli-IR compilation as the holistic simplification of a global BSF tableau. By using CNOT-equivalent controlled-Pauli Clifford gates as simplification primitives, we evaluate each Clifford across all unresolved Pauli strings, revealing global optimization opportunities typically hidden by grouping- or diagonalization-based approaches.
%     \item We develop a guaranteed-progress, forward-frame algebraic simplification algorithm that systematically eliminates tableau weights row-by-row. Our lightweight, greedy Clifford selection evaluates operations based on their whole-tableau weight reduction and proves more effective and intuitive than complex heuristic-based or legacy path-based Pauli network synthesis.
%     \item We utilize an adaptive two-qubit block emission mechanism during simplification, to mitigate the risk of degradation associated with emitting only single-qubit rotations and expose additional parallelism to downstream scheduling.
%     \item We introduce a causality-preserving ASAP block rescheduling, implemented as an graph edge-coloring-inspired heuristic. It preserves frame-induced precedences while exploiting canonical commutation relations among program primitives and permitted Trotter-block ordering freedom to expose rich two-qubit block parallelism and thus aggressively optimizes circuit depth.
% \end{itemize}

We evaluate \symphony\ on 100 representative Hamiltonians from HamLib---recently utilized in Qiskit performance benchmarking~\cite{nation2025benchmarking}---alongside selected scaled UCCSD workloads, covering diverse application domains, problem sizes, hardware connectivities, and both near-term and early fault-tolerant cost models. Specifically, relative to na\"ive term-wise synthesis across HamLib, \symphony\ reduces two-qubit gate counts and circuit depth by an average of 59\% and 91\%, respectively. Against prior state-of-the-art (SOTA) compilers,
% \symphony\ achieves a 1.05- to 1.44-fold suppression in two-qubit gate count and a 1.79- to 5.13-fold improvement in two-qubit circuit depth on average.
\symphony\ yields improvements of 1.14--1.58$\times$ in two-qubit gate count and 1.87--5.67$\times$ in circuit depth on average.

\section{Background}\label{sec:background}

\subsection{Hamiltonian Simulation Programs}\label{sec:background-hamiltonian-simulation}

Hamiltonian simulation programs form an important workload class in chemistry, materials, and optimization~\cite{dalzell2023quantum}. After a system Hamiltonian is mapped onto $n$ qubits, it can generally be written as a real linear combination of Pauli operators, $H=\sum_{k=1}^{m} c_k P_k$, $P_k\in\{I,X,Y,Z\}^{\otimes n}$. The corresponding simulation task is to approximate the unitary evolution $U(t)=e^{-iHt}$ using quantum circuits comprising basic single-qubit and two-qubit gates.

Since Pauli terms in $H$ generally do not commute, product-formula methods approximate $U(t)$ by a sequence of Pauli exponentials~\cite{trotter1959product,suzuki1990fractal}. 
Pauli-exponential sequences also naturally arise in variational quantum algorithms, such as QAOA and Trotterized unitary coupled-cluster (UCC) ans\"atze, which lower excitation generators to parameterized Pauli rotations~\cite{farhi2014quantum,romero2018strategies}. Thus, regardless of whether the source is real-time evolution or a variational ansatz, these workloads share the same high-level representation: a sequence of Pauli exponentials. Their admissible ordering transformations simply depend on the source semantics (e.g., mutually commuting rotations can typically be reordered exactly).

We therefore represent the compiler input as a sequence of Pauli rotations denoted as Pauli-IRs,
\begin{align}
    \mathcal{P}
    =
    \bigl\{
        (P_k,\theta_k)
    \bigr\}_{k=1}^{m},
    \quad
    R_{P_k}(\theta_k)
    =
    e^{-i\theta_k P_k},
\end{align}
together with the ordering freedom exposed by the front end. For the product-formula workloads considered in this work, the compiler may select an admissible term order within each
simulation step; exact commutation constraints are retained when required by the input semantics. 

\subsection{Binary Symplectic Form of Pauli-IRs}\label{sec:background-bsf}

\symphony\ represents Pauli operators in a high-level, formal binary symplectic form (BSF) tableau~\cite{aaronson2004improved,maslov2018shorter}. 
% The BSF is a standard representation for Pauli and Clifford computation~\cite{aaronson2004improved,maslov2018shorter}. 
Given $m$ Pauli rotations on $n$ qubits, their operators are encoded as
\begin{align}
    T=[X\,|\,Z]\in\mathbb{F}_2^{m\times 2n},
\end{align}
where the $i$-th row encodes Pauli operator $P_i$ as a binary vector $p_i=(x_i\,|\,z_i)$. The bit pairs
\begin{align}
    (X_{i,q},\,Z_{i,q}) = (0,0),\,(1,0),\,(0,1),\,(1,1)
\end{align}
indicate $I$, $X$, $Z$, and $Y$, respectively. Rotation angles are stored separately from the tableau. 
The support and weight of a Pauli row follow directly from
the binary representation:
\begin{align}
    s_i= X_{i,:} \lor Z_{i,:},
    \quad
    w_i=\operatorname{wt}(P_i)
       =\lVert X_{i,:}\lor Z_{i,:}\rVert_1.
\end{align}
Under conventional parity-tree synthesis, a nontrivial weight-$w_i$ Pauli exponential individually requires $2(w_i-1)$ CNOT gates. Reducing BSF weights is therefore a direct high-level proxy for reducing synthesis overhead.

BSF also exposes Pauli commutation algebraically. Two rows $p_i=(x_i\,|\,z_i)$ and $p_j=(x_j\,|\,z_j)$ commute if and only if their binary symplectic inner product vanishes:
\begin{align}
    [P_i,P_j]=0
    \quad\Longleftrightarrow\quad
    x_i z_j^{T}+z_i x_j^{T}=0
    \pmod 2.
    \label{eq:pauli-commutation}
\end{align}
Thus, support, weight, overlap, and commutation can all be computed through Boolean and arithmetic operations over $\mathbb{F}_2$, without constructing the exponentially large matrix representation of any Pauli operator.

% In this paper, each Pauli operator is stored as a
% \emph{row}; consequently, we consistently refer to
% $w_i$ as a BSF row weight. A transposed BSF convention would
% instead describe the same quantity as a column weight.

% \note{
% \symphony\ represents the Pauli operators in this high-level IR using the binary symplectic form (BSF), a standard tableau representation for Pauli and Clifford computation~\cite{aaronson2004improved,maslov2018shorter}.
% Given $m$ Pauli rotations on $n$ qubits, their Pauli operators are stored as
% \begin{equation}
%     T=[X|Z]\in\{0,1\}^{m\times 2n},
% \end{equation}
% where row $i$ encodes the Pauli operator $P_i$. For qubit $q$, the bit pair $(X_{i,q},Z_{i,q})=(0,0),(1,0),(0,1),(1,1)$ denotes $I,X,Z,Y$, respectively. The rotation angle is stored separately from the BSF row.

% The BSF exposes row weights through simple Boolean operations. For row $i$, the weight
% \begin{equation}
%     w_i=\left\|X_{i,:}\lor Z_{i,:}\right\|_1
%     =\sum_{q=1}^{n}(X_{i,q}\lor Z_{i,q})
% \end{equation}
% counts the number of qubits touched by the Pauli row. Rows with smaller weights require fewer entangling operations to lower, so weight reduction becomes the main tableau-level objective.

% The following subsection introduces the Clifford transformations used to update the BSF tableau.
% }

\subsection{Clifford Formalism}\label{sec:background-clifford}

\begin{figure}[tbp]
    \centering
    \includegraphics[width=\columnwidth]{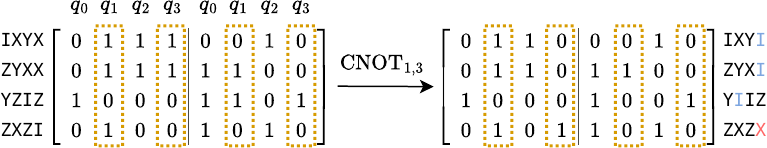}
    \caption{Example of BSF update by Clifford transformation.}
    \label{fig:cnot-update-example}
\end{figure}

The Clifford group is defined as the normalizer of the Pauli group. Thus, conjugating a Pauli operator $P$ by a Clifford operator $C$ yields another Pauli operator $P'$ up to a sign:
\begin{align}
    C P C^\dagger = sP', \quad s\in\{-1,+1\}.
\end{align}
It inherently preserves the Pauli-rotation form as
\begin{align}
    C e^{-i\theta P} C^\dagger =e^{-i(s\theta)P'}.
\end{align}
The induced sign $s$ can either be tracked separately or absorbed into the Pauli exponential's rotation angle.

In the BSF representation, Clifford conjugation acts as an invertible symplectic transformation over $\mathbb{F}_2$. A single-qubit Clifford permutes the three nonidentity Pauli letters on one qubit and therefore cannot change Pauli weight. In contrast, a two-qubit Clifford may alter whether one or both acted-on qubits belong to a Pauli support. When applied to qubits $(a,b)$, it updates only the BSF columns $(X_a,\,X_b,\, Z_a,\, Z_b)$, but the update is performed simultaneously for every Pauli row. For example, the tableau update rule by $\mathrm{CNOT}$ follows
\begin{align}
    [X_a,\,X_b\,|\,Z_a,\,Z_b] \rightarrow [X_a,\,X_a\oplus X_b\,|\,Z_a\oplus Z_b, \,Z_b]
\end{align}
and \Cref{fig:cnot-update-example} demonstrates its action on a whole tableau.

We specifically leverage a controlled-Pauli Clifford family formally known as the \emph{universal controlled gate} (UCG)~\cite{grier2022classification}:
\begin{equation}
    C(P,Q) = \frac{1}{2}((I+P)\otimes I + (I-P)\otimes Q),
\end{equation}
where $P,Q\in\{X,Y,Z\}$ specify its two Pauli axes. Any $C(P,Q)$ is Hermitian and self-inverse; the nine choices of $C(P,Q)$ together constitute a set of generators for the two-qubit Clifford group~\cite{grier2022classification}. 
Because these operations are CNOT/CZ-equivalent up to single-qubit Cliffords as
\begin{center}
\begin{adjustbox}{width=0.48\columnwidth}
    \begin{quantikz}[row sep=0.2cm, column sep=0.2cm, align equals at=1.5]
        & \gate[2,disable auto height,style]{C(P,Q)} & \ghost{H_Q}\qw \\
        & & \ghost{H_Q}\qw\\
    \end{quantikz}\,=\begin{quantikz}[row sep=0.2cm, column sep=0.2cm, align equals at=1.5]
        & \gate{H_{P}} & \ctrl{1} & \gate{H_P} & \ghost{H_Q}\qw \\
        & \gate{H_{Q}} & \ctrl{-1} & \gate{H_Q} & \ghost{H_Q}\qw
    \end{quantikz},
\end{adjustbox}
\end{center}
where $H_Z=I$, $H_X=H$, and $H_Y=S H S^\dagger$, each choice $C(P,Q)$ follows a uniform cost profile during high-level synthesis. Their distinct Pauli axes therefore induce fundamentally different BSF update trajectories, exposing uniquely varied opportunities for simultaneous weight reduction. \appendixnote{A more comprehensive discussion of the UCG Clifford formalism is provided in Appendix~\ref{sec:appendix_ucg_clifford}.}

\iffalse
\subsection{Existing Solutions}

Existing Pauli-based compilers can be distinguished by their primary optimization objects. Gate-cancellation methods optimize the synthesis variants and execution orders of Pauli rotations to expose cancellation opportunities~\cite{li2022paulihedral,jin2024tetris}. Methods based on graphs or diagrams encode Pauli operations and their dependencies for algebraic rewriting~\cite{cowtan2019phase,paykin2023pcoast}. Methods based on Pauli networks or BSF tableaux directly manipulate algebraic Pauli representations through Clifford transformations \cite{goubault2024faster,schmitz2024graph,yang2025phoenix}. These categories are not mutually exclusive but characterize the dominant abstraction used by each compiler. Detailed comparisons are presented in \Cref{sec:related_work}.
\fi

\section{Motivation}\label{sec:motivation}

% \subsection{Bottlenecks of Existing Solutions}

\subsection{Limitations of Existing Pauli-IR Synthesis}

Prior block-, cluster-, and graph-based compilers expose substantial cross-term cancellation and shared-basis synthesis opportunities, but their optimization scope is commonly bounded by predetermined blocks, commuting clusters, or extracted circuit regions~\cite{li2022paulihedral,jin2024tetris,li2025pauliforest,cowtan2019phase,paykin2023pcoast,liu2025quclear}. More recent approaches formalize Pauli-IR compilation as a global algebraic optimization over the binary symplectic form (BSF) tableau~\cite{goubault2024faster,yang2025phoenix}. Despite this advancement, existing tableau-based methods remain limited: they often adhere to restrictive \dquote{first-diagonalize, then-search-CNOT} paradigms~\cite{goubault2024faster,kuo2026unified}, employ fragmented support-based grouping~\cite{yang2025phoenix}, rely on computationally expensive metrics~\cite{yang2025phoenix}, or demand prohibitively long-horizon searches~\cite{machiya2026monteq,dubal2025paulinetwork}. Consequently, the rich algebraic properties of the Clifford formalism remain underexploited.

\subsection{Design Space of Holistic BSF Simplification}\label{sec:motivation-design-space}

% \subsection{BSF as Natural Representation for Global Optimization}

% Say why BSF serves natural representation encoding global information of Pauli strings}
% We affirm that BSF servers as natural and formal representation of Pauli-IRs, with global information stored}
% particularly for scope of logical-level synthesis and variably arranged Pauli strings}

\paragraph{BSF as Natural Representation for Global Optimization}
\iffalse
To effectively optimize Pauli-IR sequences, it is imperative to choose a representation that captures the underlying algebraic relationships between Pauli operators and Clifford transformations. Conventional synthesis paradigms track quantum operations through circuit-level structures such as CNOT parity trees or phase gadgets~\cite{li2022paulihedral,jin2024tetris,cowtan2019phase}. These structures describe one pre-diagonalized operator at a time, so the correlations that make a Hamiltonian compressible---shared supports, matching Pauli axes, and mutual commutation---are dissolved into gate sequences before they can be exploited.

The BSF tableau introduced in \Cref{sec:background-bsf} preserves exactly this information. All $m$ Pauli operators occupy a single $\mathbb{F}_2^{m\times 2n}$ tableau, and the quantities that govern synthesis cost are read directly from it: row weight determines the $2(w_i-1)$ CNOT cost of conventional parity-tree synthesis.
% ; the symplectic inner product in \eqref{eq:pauli-commutation} decides commutation without exponential-size matrix multiplication.
A two-qubit Clifford conjugation modifies only the four tableau columns of its operand pair, yet it updates every row simultaneously.
Consequently, by modeling the state of all Pauli operators simultaneously within a unified tableau, BSF exposes global structural patterns and enables synthesis engines to reason about transformations across the entire sequence, rather than being confined to local, peephole scopes.
\fi

Term-wise CNOT parity-tree synthesis processes one pre-diagonalized operator at a time and therefore dissolves Hamiltonian-wide correlations---such as shared supports and mutual commutation---before they can be leveraged. Block- and phase-gadget-based methods can instead jointly optimize multiple rotations~\cite{li2022paulihedral,jin2024tetris,cowtan2019phase}; however, when applied through selected blocks or extracted circuit regions, cross-term information beyond those boundaries remains inaccessible. In contrast, the unified BSF tableau introduced in \Cref{sec:background-bsf} naturally preserves these algebraic relationships across the active Pauli sequence. A single two-qubit Clifford conjugation modifies only four tableau columns but simultaneously updates every Pauli row. Thus, BSF natively exposes global structural patterns, empowering synthesis engines to explore broad optimization pathways rather than being confined to local, peephole scopes.

% \subsection{Necessity and Complexity of Holistic Synthesis}

%This subsection is crucial. Illustrated by the following paragraphs.}
%Following specific motivations/observations guide this work's technical design, philosophy, and evaluation}

\paragraph{Optimization space induced by simplification primitives}

Recent path-based Pauli network synthesis methods~\cite{goubault2024faster,kuo2026unified} largely mirror phase polynomial synthesis~\cite{vandaele2022phase,kissinger2019cnot}: they first diagonalize a target row, then eliminate its support using directional CNOTs. However, unlike phase tables where CNOTs execute independent column additions, a BSF tableau dictates symplectic updates where the $X$ and $Z$ components couple. Consequently, the weight reduction of any given operation strictly depends on preceding Clifford transformations. Static graph-matching models misrepresent this dynamically evolving objective and artificially bound the reachable optimization space.

Furthermore, directional CNOTs overly restrict the Clifford search space. We instead adopt the CNOT-equivalent UCG family $C(P,Q)$ introduced in \Cref{sec:background-clifford}~\cite{grier2022classification,yang2025phoenix}, which intrinsically encodes basis choice and entangling direction. For any weight-$2$ interaction (e.g., $ZZ$), exactly four of the nine UCGs will directly reduce it to weight-$1$. By abandoning the rigid pre-diagonalization paradigm, UCGs act natively on the BSF tableau, guaranteeing immediate weight reduction for any non-trivial two-qubit operator without the computational overhead of $\{H,S,\mathrm{CNOT}\}$ long-horizon searches~\cite{dubal2025paulinetwork,machiya2026monteq}.

\paragraph{Optimization opportunities missed by grouping-based simplification}

\begin{figure}[tbp]
    \centering
    \subfloat[Group-wise BSF simplification]{
        \includegraphics[width=\linewidth]{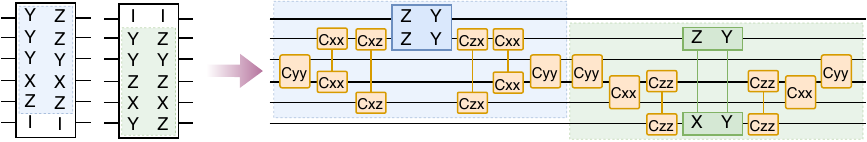}
        \label{fig:grouping-simplification}
    }\hfil
    \subfloat[Holistic BSF simplification]{
        \includegraphics[width=\linewidth]{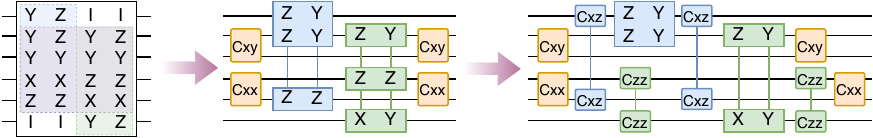}
        \label{fig:holistic-simplification}
    }
    \caption{Comparison of (a) group-wise and (b) holistic BSF simplification. By evaluating Clifford transformations against the entire tableau, the holistic approach captures cross-support overlaps, thereby halving the required UCG conjugations in this example.}
    \label{fig:grouping-vs-holistic-simplification}
\end{figure}

Another limitation commonly occurring in prior tableau-based synthesis methods is that first partitioning Pauli strings into groups of identical support or mutually commuting terms and simplifying each group in isolation~\cite{yang2025phoenix,van2020circuit} leave broader global optimization opportunities. 
% As demonstrated in \Cref{fig:grouping-vs-holistic-simplification}, a group-wise approach is blind to overlapping subset patterns that exist across heterogeneously supported Pauli strings, thereby squandering potential simultaneous simplifications.
\Cref{fig:grouping-vs-holistic-simplification} illustrates this loss concretely: the four heterogeneously supported strings share exploitable structure that spans the imposed group boundaries, and processing them without grouping halves the number of required two-qubit Clifford conjugations. 
In real-world applications, cross-group optimization opportunities are ubiquitous. For example, in chemistry simulations, Jordan-Wigner (JW) encodings characteristically yield regular, chain-like Pauli patterns, whereas Bravyi-Kitaev (BK) encodings produce sparse, heterogeneous tree-like ones. A holistic formulation retains these opportunities. It additionally dispenses with the inter-group ordering pass that grouped pipelines require to re-stitch their blocks~\cite{yang2025phoenix,tomesh2021optimized}.

\subsection{Beyond Single-Qubit-Only Emission}\label{sec:motivation-beyond-1q}

\begin{table}[tbp]
    \centering
    \caption{Average synthesis cost in CNOT count of $e^{-i\sum_j \theta_j P_j}$ for generic coefficients, averaged over all fixed-cardinality subsets $\mathcal{P}\subseteq \{X,Y,Z\}^{\otimes 2}$.}
    \label{tab:pauli-exp-synth-cost}
    \small\begin{tabular}{|c|c|c|c|c|c|c|c|}
\hline
$\lvert \mathcal{P} \rvert$ & 3 & 4 & 5 & 6 & 7 & 8 & 9 \\
\hline
$\#\mathrm{CNOT}$ & 2.07 & 2.29 & 2.64 & 2.93 & 3.0 & 3.0 & 3.0 \\
\hline
\end{tabular}

\end{table}

\begin{figure}[tbp]
    \centering
    \includegraphics[width=\linewidth]{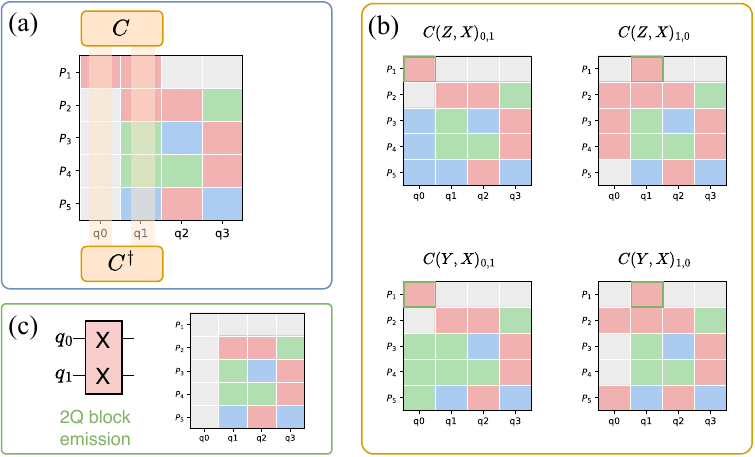}
    \caption{Avoiding degradation via early two-qubit block emission during BSF simplification ($X$, $Y$, $Z$ shown as red, green, blue). (a) Initial Pauli strings to be simplified by Clifford conjugation acting on ($q_0, q_1$). (b) Forcing target $P_1$ to weight-$1$ invariably inflates the weights of remaining strings. (c) Emitting $P_1$ early at weight-$2$ prevents collateral inflation and significantly reduces overall synthesis overhead.}
    % \caption{Beyond single-qubit rotation emission during Pauli terms simplification. $X$, $Y$, and $Z$ operators are represented by red, green, and blue squares, respectively. (a) A set of heterogeneous-weight Pauli strings to be simplified by appropriate Clifford conjugation on the operands ($q_0, q_1$) of the lowest-weight Pauli string $P_1$. (b) Examples of four beneficial Clifford selections reduce $P_1$ to be weight-$1$, but inevitably all increase the weights of some other Pauli strings. (c) An alternative early-emission at weight-$2$ for $P_1$ results in a two-qubit Pauli rotation and remaining Pauli strings, with lower overall synthesis overhead.}
    \label{fig:beyond-1q-emission}
\end{figure}

% 传统上，tableau-based synthesis是在当前搜索步骤中emit任何出现的weight-$1$ Pauli operator (single-qubit Pauli rotation)~\cite{goubault2024faster,yang2025phoenix,kuo2026unified}, 但是严格将weight-$1$ Pauli operator的出现作为terminal case用力过猛，会有反优化的风险。
% 根据 \Cref{the:interaction-rank-cnot}, 任意连续的两个weight-$2$ Pauli operator至多只需要两个CNOT综合，当然单个weigh-2 Pauli exponential也需要两个CNOT综合了；更多数目的连续weight-$2$ Pauli operator也未必需要至多三个CNOT才能综合，\Cref{tab:pauli-exp-synth-cost}是一个平均化CNOT综合代价的统计
% \Cref{fig:beyond-1q-emission}中是一个很insightful的案例，说明某些时候在要去搜索Clifford简化的Pauli string达到权重2的时候就应该及时emit，不必继续搜索Clifford将其优化为1。前者很多时候可以致使更少的综合代价，且引入更宽松的emitted block之间的依赖关系，能够潜在支撑更加高效的concurrency scheduling。
% 如果按照逐层直到weight-$1$才elimination的方式~\cite{kuo2026unified,yang2025phoenix,goubault2024faster}，simplification sequence长度也会增加，计算代价也会增加
% Pauli-IR 的代数性质异于binary matrix形式的phase table~\cite{vandaele2022phase,amy2019controlled}和parity matrix~\cite{kissinger2019cnot,wu2019optimization,murphy2023global}，所以不必完全按照phase table or parity matrix的方式去做逐层的weight-$1$ elimination，尤其是对于一些连续的weight-$2$ Pauli operator，直接将其作为terminal case去emit，往往可以获得更低的综合代价和更宽松的依赖关系，从而支持更高效的并行调度。
% in real-world application, 许多只包含2-Local、3-local Pauli terms的optimization问题的Hamiltonian, 更加不适合weight-$1$ as terminal的elimination方式

Existing tableau-based procedures uniformly treat weight-1 as the terminal condition, continuing to simplify a row until it is reduced into a single-qubit rotation before emitting it~\cite{goubault2024faster,yang2025phoenix,kuo2026unified}. We argue that enforcing weight-$1$ as a strict terminal condition is overkill and incurs significant risks of anti-optimization. As illustrated in \Cref{fig:beyond-1q-emission}, relentlessly pushing a target row to weight-$1$ via additional conjugations frequently inflates the weights of unresolved rows, resulting in a net increase in overall synthesis overhead.

A weight-2 row, by contrast, is already efficiently executable. By the interaction-rank criterion of \Cref{thm:interaction-rank-cnot} and the statistical results in \Cref{tab:pauli-exp-synth-cost}, the average synthesis cost of a two-qubit Pauli-evolution block---i.e., an exponential of a sum of two-qubit Pauli operators, $e^{-i\sum_j \theta_j P_j}$, $P_j\in \{X,Y,Z\}^\otimes 2$---remains between two and three CNOTs, rather than being invariably three CNOTs as previously assumed. Early emission with weight at most two additionally shortens the Clifford sequence and converts many fine-grained rotations into coarser, independently schedulable blocks, which relaxes the dependencies available to the downstream scheduler.

% Moreover, emitting weight-$2$ blocks directly is algorithmically sound and hardware-efficient. According to the interaction-rank criterion (\Cref{thm:interaction-rank-cnot}), a two-qubit Pauli rotation inherently requires at most two CNOT gates for synthesis. Empirically, as detailed in \Cref{tab:pauli-exp-synth-cost}, continuous emissions of weight-$2$ operators maintain remarkably low average CNOT costs. Early emission of weight-$2$ blocks not only yields lower synthesis overhead but also introduces looser inter-block dependencies compared to dense single-qubit operations, thereby providing downstream concurrency schedulers with a vastly expanded optimization landscape. Finally, relaxing the terminal boundary shortens the overall Clifford simplification sequence and substantially trims compilation time. This is especially advantageous for real-world optimization Hamiltonians heavily populated by 2-local and 3-local Pauli terms, where forcing a weight-$1$ termination is fundamentally counterproductive.
\begin{theorem}[Interaction-rank criterion]\label{thm:interaction-rank-cnot}
Let $U_J=e^{-iH_J}$ be an exact two-qubit Pauli-evolution block with
\begin{equation}
    H_J=\sum\nolimits_{\mu,\nu\in\{X,Y,Z\}}J_{\mu\nu}\,
    \sigma_\mu\otimes\sigma_\nu,
    \quad J\in\mathbb{R}^{3\times 3}.
\end{equation}
If $\operatorname{rank}(J)\leq 2$, then $U_J$ requires at most two CNOT
gates for synthesis. Otherwise, a full-rank $J$ yields a
three-CNOT unitary for generic coefficients.  The interaction rank therefore
provides an analytic two- versus three-CNOT test without numerical KAK
decomposition. \appendixnote{The detailed proof is provided in Appendix~\ref{sec:appendix-interaction-rank}.}
\end{theorem}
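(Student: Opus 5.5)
The plan is to reduce $U_J$, up to local (single-qubit) unitaries, to the canonical Cartan form $e^{-i(d_1 XX + d_2 YY + d_3 ZZ)}$, and then read off the CNOT count from the number of nonzero canonical coefficients. Local unitaries are free with respect to CNOT count, so the CNOT cost of $U_J$ equals that of its canonical representative.

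\textbf{Step 1: diagonalize $J$ by local rotations.} Conjugating by $V_a\otimes V_b$ with $V_a,V_b\in SU(2)$ acts on the Pauli vectors through the adjoint representation:
\begin{equation*}
V\sigma_\mu V^\dagger=\sum\nolimits_{\mu'}R_{\mu'\mu}\,\sigma_{\mu'}, \qquad R\in SO(3).
\end{equation*}
Hence $(V_a\otimes V_b)H_J(V_a\otimes V_b)^\dagger=H_{R_aJR_b^{T}}$. Take a real singular value decomposition $J=O_1 D O_2^{T}$ with $O_1,O_2\in O(3)$. If $\det O_i=-1$, flip the sign of one column of $O_i$ and absorb the sign into the corresponding diagonal entry of $D$; this is allowed because signed entries are harmless. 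We obtain $R_a,R_b\in SO(3)$ with $R_a^{T} J R_b=\operatorname{diag}(d_1,d_2,d_3)$. Then
\begin{equation*}
U_J=(V_a\otimes V_b)\,e^{-i(d_1XX+d_2YY+d_3ZZ)}\,(V_a\otimes V_b)^\dagger,
\end{equation*}
and the three terms $XX,YY,ZZ$ mutually commute. The rank is preserved by this transformation, so $\operatorname{rank}(J)$ equals the number of nonzero $d_k$.

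\textbf{Step 2: rank $\le 2$ gives two CNOTs explicitly.} If $\operatorname{rank}(J)\le 2$, one $d_k$ vanishes. After a further local relabeling of axes (a permutation realized by local Cliffords with appropriate signs), we may take $d_2=0$, leaving $e^{-i(aXX+bZZ)}$. Conjugation by a CNOT with qubit 1 as control maps $XX\mapsto XI$ and $ZZ\mapsto IZ$. Therefore
\begin{equation*}
e^{-i(aXX+bZZ)}=\mathrm{CNOT}\,\bigl(e^{-iaX}\otimes e^{-ibZ}\bigr)\,\mathrm{CNOT},
\end{equation*}
which uses exactly two CNOTs. This settles the upper bound.

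\textbf{Step 3: full rank generically needs three CNOTs (main obstacle).} For full-rank $J$ all $d_k\neq 0$. The statement only claims genericity, so it suffices to exhibit an invariant that certifies the unitary lies outside the two-CNOT class except on a measure-zero set. I would invoke the Shende--Markov--Bullock characterization: a two-qubit $U$ is two-CNOT implementable iff $\operatorname{tr}\,\gamma(U)$ is real, where $\gamma(U)=U(Y\otimes Y)U^{T}(Y\otimes Y)$. Evaluating this for the canonical gate, whose eigenphases in the magic (Bell) basis are $\pm d_1\pm d_2\pm d_3$ combinations, gives $\operatorname{tr}\gamma$ as a sum of exponentials of $2i$ times those combinations. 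Its imaginary part is a nontrivial trigonometric polynomial in $(d_1,d_2,d_3)$, essentially proportional to $\sin 2d_1\sin 2d_2\sin 2d_3$. This vanishes only on a measure-zero set, which includes the cases where some $d_k$ is a multiple of $\pi/2$ and the canonical gate is locally equivalent to one with a zero coordinate. Hence a generic full-rank $J$ yields a three-CNOT unitary. The remaining delicate bookkeeping is to compute this imaginary part correctly and to note that zero CNOTs and one CNOT are excluded a fortiori.
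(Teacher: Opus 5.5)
\textbf{Verdict: correct, same skeleton as the paper, different certification of the CNOT counts.} Your Step~1 is the paper's own reduction: the adjoint $SO(3)$ action of \Cref{lem:local-so3-action} followed by the proper signed SVD of \Cref{lem:proper-svd-cartan}.

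Where you differ is in how the CNOT counts are justified. The paper only invokes the Weyl-chamber stratification: $s_3=0$ puts $U_J$ on the $c_3=0$ boundary, and nonzero coordinates outside an unspecified ``folding'' set give the three-CNOT class. You instead give a constructive upper bound, $e^{-i(aXX+bZZ)}=\mathrm{CNOT}\,(e^{-iaX}\otimes e^{-ibZ})\,\mathrm{CNOT}$, and a certified lower bound via the Shende--Markov--Bullock invariant.

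The computation you sketch goes through. Since $A^{T}=A$ and $Y\otimes Y$ commutes with $A=e^{-i(d_1XX+d_2YY+d_3ZZ)}$, you get $\gamma(A)=A^2$. Summing over the Bell-basis eigenvalues gives $\operatorname{Im}\operatorname{tr}\gamma(A)=-4\sin 2d_1\sin 2d_2\sin 2d_3$. The exceptional set is therefore exactly where some $d_k$ lies in $\tfrac{\pi}{2}\mathbb{Z}$, which is sharper than the paper's description.

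To close the argument, you should record three small points:
\begin{itemize}
\item The criterion is stated for $SU(4)$; this holds here, since $\operatorname{tr}H_J=0$ gives $\det U_J=1$.
\item $\operatorname{tr}\gamma$ is invariant under $SU(2)\otimes SU(2)$, because $u\,Y\,u^{T}=Y$ for $u\in SU(2)$. So it may be evaluated on the canonical representative.
\item Genericity in $J$ itself, not just in $(d_1,d_2,d_3)$, follows because $J\mapsto\operatorname{Im}\operatorname{tr}\gamma(U_J)$ is real-analytic on $\mathbb{R}^{3\times3}$ and not identically zero.
\end{itemize}

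The trade-off: the paper's argument is shorter but rests wholly on the cited Cartan classification. Yours is self-contained apart from one standard invariant, and it yields both an explicit two-CNOT circuit and an exact characterization of the degenerate angles.
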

\iffalse
\begin{proof}[Proof sketch]
The criterion follows from a proper singular-value decomposition of $J$: local basis changes transform $H_J$ into the Cartan form $s_1XX+s_2YY+s_3ZZ$, so $\operatorname{rank}(J)\leq2$ forces the third Cartan component to vanish\appendixnote{---the at-most-two-CNOT boundary---whereas full rank generically leaves all three components nonzero}.
\end{proof}
\fi
Emitting every weight-2 row is nevertheless too aggressive, and the correct boundary is not fixed. While the tableau remains dense, its rows overlap heavily on the same qubits, so the next UCG has a high chance of simplifying several of them at once; retiring a row early forfeits its share of that collective reduction. Once the tableau becomes sparse, this leverage largely disappears while the collateral weight increases become harder to amortize over the few rows that remain. The emission boundary should therefore track the density of the unresolved tableau---always retiring single-qubit rotations, but releasing weight-2 blocks only once further global reduction has little left to exploit.% \symphony\ realizes this adaptive criterion in \Cref{sec:methodology}.

\section{Methodology}\label{sec:methodology}

\begin{figure*}[tbp]
    \centering
    \includegraphics[width=\linewidth]{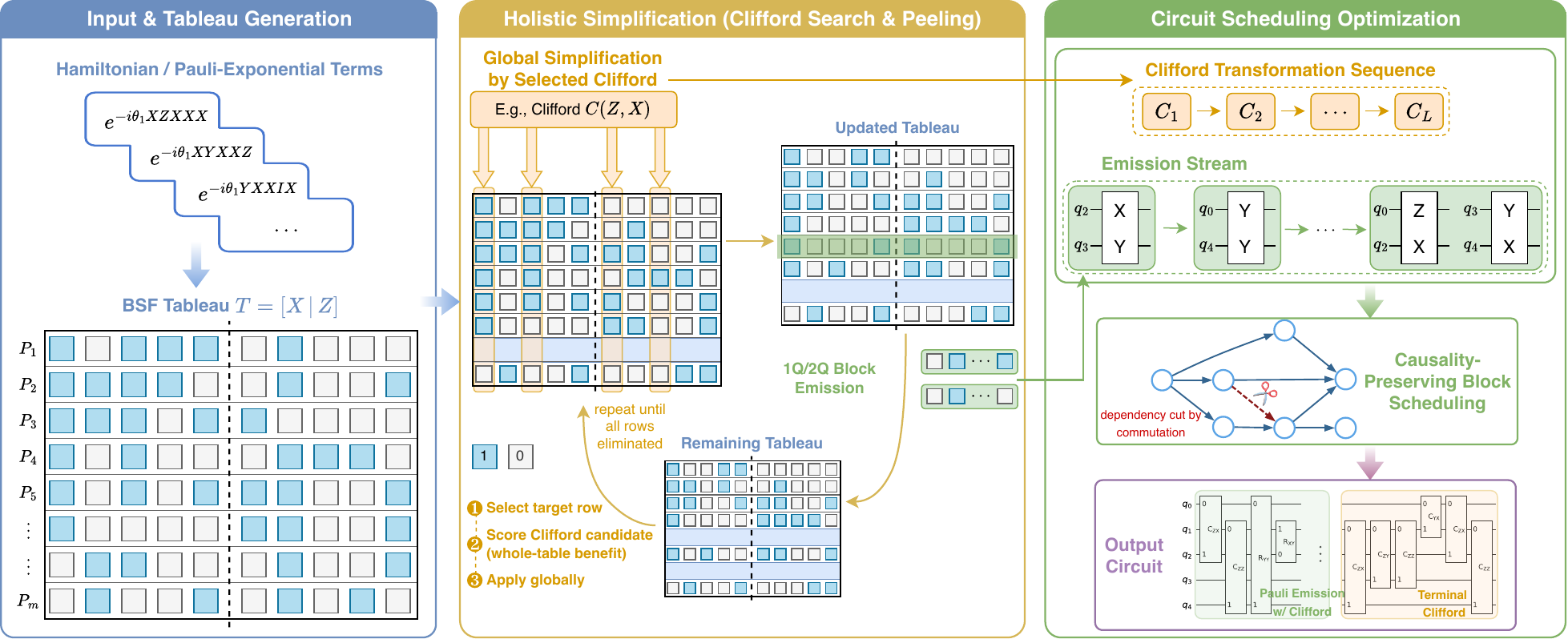}
    % \caption{Overview of the \symphony\ compilation workflow. A global BSF tableau is iteratively simplified via UCG Clifford conjugations guided by a greedy heuristic. Throughout this forward-frame reduction, an adaptive emission mechanism dynamically extracts executable Pauli blocks. Finally, a causality-preserving ASAP scheduling pass exposes extensive two-qubit parallelism across the emitted sequence before a terminal Clifford closes the circuit.}
        % \caption{Overview of the \symphony\ workflow. A unified BSF tableau is simplified by target-reducing UCGs selected for whole-tableau benefit. Eligible rows are emitted in the forward Clifford frame, then causality-preserving block scheduling exposes parallelism before a terminal Clifford closes the frame.}
    \caption{Overview of the \symphony\ compilation workflow.}
    \label{fig:profile}
\end{figure*}

\subsection{Overview}

Guided by the principles established in \Cref{sec:motivation}, we propose \symphony\ for generic  Hamiltonian simulation program compilation. As visualized in \Cref{fig:profile}, \symphony\ maintains all Pauli-IRs in a global BSF tableau and synthesizes them through a forward-frame execution model, peeling off executable operations as the tableau is dynamically simplified.

As formalized in \Cref{algo:holistic-bsf-simp}, the core engine iteratively applies selected UCGs to simplify all active rows within the tableau, emitting eligible weigh-$1$ or weight-$2$ Pauli exponentials. This forward-frame simplification process compiles the desired evolution operator into an interleaved sequence of Clifford conjugations $C_\ell$ and Pauli-IR emissions $E_\ell$:
\begin{align}
    \left(C_1\cdots C_L\right)\cdot E_L C_L\cdots E_1 C_1 E_0
    \approx e^{-i\sum_{j=1}^{m}\theta_j P_j},
    \label{eq:forward-frame}
\end{align}
where the rightmost factor $E_0$ is executed first and the bracketed terminal Clifford last. The resulting instruction stream is then mapped onto a dependency graph for a causality-preserving as-soon-as-possible (ASAP) scheduling pass focusing on maximizing two-qubit block parallelism. The terminal Clifford sequence $\prod_{\ell=1}^L C_\ell^\dagger$ can be further resynthesized by standard Clifford optimization, or completely absorbed into the measurement observables by classical post-processing for expectation value computation workloads.

\subsection{Forward-Frame Pauli-Row Reduction and Adaptive Two-Qubit Block Emission}

\symphony\ orchestrates the algebraic reduction of the BSF tableau via a row-by-row target elimination strategy (the outer loop of \Cref{algo:holistic-bsf-simp}). To guarantee forward progress and algorithm termination, the compiler dynamically selects the active row $\tau$ with the lowest non-trivial weight as the immediate simplification target. Since \code{Emit} always extracts rows of weight at most one, a newly selected target satisfies $\operatorname{wt}(P_\tau)\ge 2$. While the target row index $\tau$ is fixed, the algorithm evaluates Clifford candidates with $\Delta_C(P_\tau)=-1$, so the target weight decreases by exactly one per accepted Cliffrod. Such candidates always exist according to the closed-form UCG conjugation rule of \Cref{sec:appendix_ucg_clifford}.
%  for any pair $\{a,b\}\subseteq \operatorname{supp}(P_\tau)$ carrying letters $(A,B)$, choosing control axis $P_1=A$ and a target axis $P_2$ anticommuting with $B$ annihilates the letter on $a$ and leaves a single non-identity letter on $b$, reducing the row weight by one. 
The target therefore reaches weight one after at most $\operatorname{wt}(P_\tau)-1\le n-1$ Clifford transformations and is then extracted by \code{Emit}.
%   Because $\tau$ is reset only when the target itself is extracted, every row becomes the target at most once.
  
%   By applying strategically chosen UCGs to the overlapping qubits of this target, the row's weight strictly decreases until it becomes eligible for extraction from the global tableau. 

To mitigate the degradation risks of strictly enforcing single-qubit emissions, \symphony\ implements an adaptive two-qubit block emission mechanism (\code{Emit} function in \Cref{algo:holistic-bsf-simp}). The function unconditionally strips out all weight-$1$ Pauli operators, but applies a dynamic density threshold $\rho$ as the executable boundary for weight-$2$ candidates. Specifically, if the average per-qubit Pauli weight of the remaining active rows $\rho_{\mathcal{A}}$, denoted as weight density, falls below the threshold $\rho$ (lines \ref{line:active-qubits}--\ref{line:density-comparison} in \Cref{algo:holistic-bsf-simp}), \symphony\ asserts that the remaining tableau is sparse enough that further global overlapping is unlikely to yield beneficial simultaneous simplifications and therefore emits two-qubit blocks. 

% Rows of weight at most one are always emitted; weight-$2$ rows are emitted only when $\rho_{\mathcal{A}}\leq\rho$, and are otherwise retained so that later UCG candidate sweeps may reduce them with beneficial simplification on other active rows.

% Ideally, in this low-density regime, the algorithm safely emits weight-$2$ Pauli strings as two-qubit block rotations; conversely, if the tableau remains densely packed ($\rho_{\mathcal{A}} > \rho$), weight-$2$ strings are temporarily retained, subjecting them to potential subsequent UCG sweeps that might coincidentally reduce them to weight-$1$ with beneficial simplification on of optimizing other targets.

% Emitted Pauli rotations sharing identical qubit supports are buffered until a subsequent UCG modifies those qubits. These accumulated rotations are then merged and synthesized as a single two-qubit unitary, effectively amortizing the entangling overhead.

\begin{algorithm}[tbp]
    \SetAlgoLined
    \caption{Holistic BSF Simplification}
    \label{algo:holistic-bsf-simp}
    \SetKwInOut{Input}{Input}
    \SetKwInOut{Output}{Output}
    \SetKwBlock{Assumption}{Assumption}{}
    \SetKwFunction{Emit}{Emit}
    \SetKwProg{Fn}{Function}{:}{}

    \Input{Pauli-IRs $\mathcal{P} = \{(P_i, \theta_i) \}_{i=1}^m$ over $n$ qubits, held in a BSF tableau $T=[X\,|\, Z] \in \mathbb{F}_2^{m\times 2 n}$; Clifford options $\mathcal{O}=\bigl\{C{(P,Q)} \mid P,Q\in \{X,Y,Z\}\bigr\}$; weight density threshold $\rho\in[0,1]$}
    \Output{Clifford transformation sequence $\mathcal{C}=\langle C_\ell\rangle_1^L$ and emission stream $\mathcal{E}=\langle E_\ell\rangle_0^L$, realizing $(C_1\cdots C_L)\cdot E_L C_L\cdots E_1 C_1 E_0 \approx e^{-i\sum_{k=1}^{m}\theta_k P_k}$}
    
    \BlankLine
    $\mathcal{A}\gets\{1,\dots,m\}$\tcp*{active rows}
    $\tau\gets\bot$\tcp*{current target row}
    $\mathcal{C},\,\mathcal{E}\gets\varnothing$;\quad \Emit{}\;

    \While{$\mathcal{A}\neq\varnothing$}{
        \lIf{$\tau=\bot$}{$\tau\gets\arg\min_{i\in\mathcal{A}} \textup{wt}(P_i)$}
        % $S\gets\textsc{supp}(P_\tau)$;\quad 
        $\beta^\star\gets-\infty$\;

        \ForEach{$C\in\mathcal{O}$ \textup{and} $\{a,b\}\subseteq \textup{supp}(P_\tau) $}{
            \If{$\Delta_C(P_\tau) := \textup{wt}(C P_\tau C) - \textup{wt}(P_\tau) = -1$}{
                $\Delta W \gets  \sum_{i\in \mathcal{A}} \Delta_C(P_i)$\;\label{line:weight-change}
                $N_{-} \gets \sum_{i\in\mathcal{A}}\mathbf{1}[\Delta_C(P_i) < 0]$\;\label{line:num-decrease}
                $N_{+} \gets \sum_{i\in\mathcal{A}}\mathbf{1}[\Delta_C(P_i) > 0]$\;\label{line:num-increase}
                $\beta \gets (-\Delta W, N_{-}, -N_{+})$\tcp*{Tableau benefit}
                \lIf{$\beta>_{\mathrm{lex}}\beta^\star$}{
                    $\beta^\star\!\gets\!\beta$;\, $(C^\star\!,a^\star\!,b^\star)\!\gets\!(C,a,b)$
                }
            }
        }
        update $T$, $\{P_i\}_{i\in\mathcal{A}}$ by Clifford $(C^\star, a^\star, b^\star)$\;
        append $(C^\star,(a^\star,b^\star))$ to $\mathcal{C}$;\quad \Emit{}\;
    }
    \Return $\mathcal{C},\mathcal{E}$\;

    \BlankLine
    \Fn{\Emit{}}{
        \lIf{$\mathcal{A}=\varnothing$}{\Return}
        $\mathcal{R}\gets\{i\in\mathcal{A}\mid \operatorname{wt}(P_i)\leq 1\}$\;
        \If{$\exists i\in\mathcal A:\operatorname{wt}(P_i)=2$}{
            $Q_{\mathcal A}\gets\bigcup_{i\in\mathcal A}\operatorname{supp}(P_i)$\;\label{line:active-qubits}
            $\rho_{\mathcal A}\gets\frac{1}{|Q_{\mathcal A}|}\frac{1}{|\mathcal A|}\sum_{i\in\mathcal A}\operatorname{wt}(P_i)$\;\label{line:active-density}
            \lIf{$\rho_{\mathcal A}\leq\rho$}{
                $\mathcal{R}\gets\mathcal{R}\cup
                \{i\in\mathcal{A}\mid \operatorname{wt}(P_i)=2\}$\label{line:density-comparison}
            }
        }
        \lIf{$\mathcal{R}=\varnothing$}{\Return}
        append $\big(|\mathcal{C}|,\{P_i\}_{i\in\mathcal{R}},\{\theta_i\}_{i\in\mathcal{R}}\big)$ to $\mathcal{E}$;\quad
        $\mathcal{A}\gets\mathcal{A}\setminus\mathcal{R}$\;
        \lIf{$\tau\in\mathcal{R}$}{$\tau\gets\bot$}
    }
\end{algorithm}

\subsection{Clifford Selection for Simultaneous Weight Reduction of Tableau}

\symphony\ utilizes a lightweight, greedy heuristic that directly quantifies structural weight reduction to strategically select Clifford conjugations. Within an episode, \Cref{algo:holistic-bsf-simp} evaluates all valid Clifford candidates applied to pairs of the supporting qubits of the currently targeted Pauli row $P_\tau$. Any candidate $C$ that strictly reduces the weight of the target row is scored by its effect on the entire active tableau through three integer quantities: the aggregate weight change $\Delta W$, the number of rows whose weight decreases $N_{-}$, and the number whose weight increases $N_{+}$ (lines \ref{line:weight-change}--\ref{line:num-increase}). Candidates are ranked by the quantified benefit $\beta=(-\Delta W,\,N_{-},\,-N_{+})$. This metric explicitly incentivizes simultaneous simplification: it prioritizes the UCG that induces the steepest drop in total tableau weight, breaking ties by favoring operations that benefit the highest raw count of Pauli strings while penalizing those that introduce collateral weight inflation. 
\appendixnote{This transparent evaluation equipped with the localized Clifford selection strategy is remarkably robust against the volatility of heterogeneous tableaux, completely eliminating the stalling and oscillation behaviors observed when using convoluted heuristics~\cite{yang2025phoenix}.}

% Evaluating $\beta$ naively would require transforming a copy of the tableau for every candidate. \symphony\ avoids this entirely. A UCG on a qubit pair reads and writes only the four BSF columns of that pair, so each row's behavior is determined by its $4$-bit local pattern, of which there are just $16$. We therefore precompute, for all nine UCGs and all sixteen patterns, the transformed pattern, the induced weight change, and the conjugation sign. For a given pair, one pass over the active rows produces a $16$-bin histogram of their local patterns, after which $\Delta W$, $N_{-}$, and $N_{+}$ for each of the nine UCGs follow from $16$-dimensional inner products against the precomputed tables. Whole-tableau scoring thus costs one histogram per qubit pair rather than one tableau copy per candidate, which is what makes a global objective affordable at every step.

\subsection{Causality-Preserving Block Scheduling}

% 这段部分的terminology和message尤其要参照我已经在abstract和Introduction已经写的内容来叙述

% 另外，以下两个proposition搭配以往熟知的Pauli operator之间的对易性（也是相同overlapping qubit上的Pauli operator要相同）, Pauli-IR block <-> Pauli-IR block之间、Pauli-IR block <-> UCG Clifford 之间，UCG Clifford <-> UCG Clifford之间的对易性判定，都是可以通过local Pauli letter的比较来实现的。所以这也就给我我们canonical commutation relations，方便进一步做scheduling上的优化

Because \symphony\ peels Pauli-IRs dynamically within a forward Clifford frame, the emitted sequence inherently carries structural precedences (e.g., a Pauli block $E_1$ must logically execute after the Clifford $C_1$ that exposed it). However, rigidly executing this sequence sequentially leaves substantial parallelism unexploited. To aggressively optimize circuit depth, we introduce a causality-preserving ASAP block rescheduling phase, implemented via a graph edge-coloring heuristic.

This scheduling engine also exploits canonical commutation relations alongside inherent sequence flexibility to pull operations forward without violating logical equivalences. Since Pauli blocks and UCGs natively expose their algebraic structures, \symphony\ resolves broad commutativity checks analytically via local Pauli letter comparisons. Let $M_q \in \{I, X, Y, Z\}$ denote the Pauli letter of operator $M$ on qubit $q$. Using the projection form $C_{a,b}(P,Q)=I-2\Pi(P)_a\Pi(Q)_b$ with $\Pi(P)=(I-P)/2$, we establish two foundational rules:

\begin{proposition}[UCG--UCG commutation]\label{prop:ucg-ucg-commutation}
Two UCGs commute if and only if they carry identical Pauli letters on all shared qubits. Thus, commutativity is trivially satisfied for disjoint operations or upon matching intersecting Paulis.
\end{proposition}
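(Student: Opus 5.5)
The plan is to work entirely with the projection form $C_{a,b}(P,Q)=I-2\Pi(P)_a\Pi(Q)_b$, where $\Pi(P)=(I-P)/2$ is the rank-one projector onto the $-1$ eigenvector $|p\rangle$ of $P$. Write two UCGs as $U=I-2A$ and $V=I-2B$, with $A=\Pi(M_1)_{q_1}\otimes\Pi(M_2)_{q_2}$ and $B=\Pi(M'_1)_{q'_1}\otimes\Pi(M'_2)_{q'_2}$. Here the letters $M_i$, $M'_j$ are exactly the Pauli letters of the gates on their qubits, whether a qubit acts as control or as target. Expanding gives $UV-VU=4(AB-BA)$. So the statement reduces to showing that $A$ and $B$ commute if and only if the letters agree on every shared qubit. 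I will also use throughout that two distinct letters in $\{X,Y,Z\}$ anticommute, so equal letters are the only commuting case.

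I will split into cases by the number of shared qubits. For zero shared qubits, $A$ and $B$ act on disjoint tensor factors and commute trivially. For exactly one shared qubit $q$ with letters $M$ and $M'$, factor out the remaining qubits. This gives $AB-BA=[\Pi(M),\Pi(M')]_q\otimes R$, where $R$ is a tensor product of nonzero projectors on distinct qubits and is therefore nonzero. Hence commutation holds iff $[\Pi(M),\Pi(M')]=0$, which is equivalent to $[M,M']=0$ and hence to $M=M'$.

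For two shared qubits, the same qubit pair may carry the roles in either order; the letters are compared qubit by qubit. If both letters match, $A=B$ and the two commute. If they match on one qubit and differ on the other, then $AB-BA=[\Pi(M),\Pi(M')]\otimes\Pi(N)$, which is nonzero by the one-qubit case.

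The remaining subcase, where the letters differ on both qubits, is the main obstacle. Here a cancellation between the two tensor factors is conceivable, since anticommuting letters such as $X\otimes X$ and $Z\otimes Z$ commute as Paulis. To rule this out for the projectors, I will use the rank-one structure. For $M\neq M'$ we have $\Pi(M)\Pi(M')=\langle m|m'\rangle\,|m\rangle\langle m'|$, with $|\langle m|m'\rangle|=1/\sqrt2\neq0$. Therefore $AB$ is a nonzero rank-one operator with range spanned by $|m_1\rangle\otimes|m_2\rangle$, while $BA$ has range spanned by $|m'_1\rangle\otimes|m'_2\rangle$. Because $|m_i\rangle$ and $|m'_i\rangle$ are not parallel, these product vectors are not parallel either, so $AB\neq BA$.

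Combining the cases yields the ``if and only if''. The trailing remark of \Cref{prop:ucg-ucg-commutation}, commutation for disjoint supports or for matching letters on the intersection, is then the sufficiency direction read off directly.
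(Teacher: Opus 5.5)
Your proof is correct. It takes the route the paper points to: the paper only invokes the projection form $C_{a,b}(P,Q)=I-2\Pi(P)_a\Pi(Q)_b$ and gives no written proof of this proposition. Your reduction to $[A,B]=0$ and the case split on shared qubits supply the missing details, including the rank-one, mutually-unbiased argument that rules out cancellation when both shared letters differ. The two-shared-qubit subcases could also be merged into one line: the rank-one projectors onto $|m_a m_b\rangle$ and $|m'_a m'_b\rangle$ commute iff these vectors are parallel or orthogonal, and their overlap has modulus $1$, $1/\sqrt{2}$, or $1/2$, never $0$.
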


\begin{proposition}[UCG--Pauli-rotation commutation]\label{prop:ucg-pauli-commutation}
$R_M(\theta)=e^{-i\theta M}$ commutes with $C_{a,b}(P,Q)$ if and only if $M_a\in\{I,P\}$ and $M_b\in\{I,Q\}$. Equivalently, the Pauli rotation must match the UCG's Pauli letters on all overlapping qubits.
\end{proposition}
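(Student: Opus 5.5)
We prove the UCG--Pauli-rotation commutation statement (\Cref{prop:ucg-pauli-commutation}) by first reducing commutation of the rotation to commutation of the Pauli operator, and then testing that on the projection form of the UCG.

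\textbf{Step 1: reduce to $M$.} Since $C:=C_{a,b}(P,Q)$ is Hermitian and self-inverse, $R_M(\theta)$ commutes with $C$ if and only if $C e^{-i\theta M}C = e^{-i\theta\, CMC}$ equals $e^{-i\theta M}$. By the Clifford property $CMC=sM'$ for a Pauli $M'$ and sign $s$, so this says $e^{-i\theta sM'}=e^{-i\theta M}$. For generic $\theta$ (indeed for any $\theta\notin\frac{\pi}{2}\mathbb{Z}$, reading the condition as holding for the parameterized rotation), comparing the traceless parts $\cos\theta\, I - i\sin\theta\, sM'$ and $\cos\theta\, I - i\sin\theta\, M$ gives $sM'=M$. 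Hence it suffices to show that $[C,M]=0$ if and only if the stated letter condition holds.

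\textbf{Step 2: commutation of $M$ with the projector.} Write $C=I-2\Pi(P)_a\Pi(Q)_b$, so $[C,M]=0$ exactly when $[\Pi(P)_a\Pi(Q)_b,\,M]=0$. Factor $M=M_a\otimes M_b\otimes M_{\mathrm{rest}}$; the factor $M_{\mathrm{rest}}$ commutes with everything. Each single-qubit letter either commutes or anticommutes with the relevant axis:
\begin{itemize}
\item If $M_a\in\{I,P\}$, then $M_a$ commutes with $\Pi(P)$.
\item Otherwise $M_a$ anticommutes with $P$, so $M_a\Pi(P)=\Pi'(P)M_a$ with $\Pi'(P)=(I+P)/2$.
\end{itemize}
The same holds on qubit $b$ with $Q$.

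\textbf{Step 3: the case analysis.} Compute $M\,\Pi(P)_a\Pi(Q)_b\,M=\tilde\Pi_a\tilde\Pi_b$, where each $\tilde\Pi$ is either the original projector or the complementary one. The operator $\Pi(P)\otimes\Pi(Q)$ is a nonzero rank-one-per-factor projector, and $\tilde\Pi_a\otimes\tilde\Pi_b$ equals it only if both factors are unchanged. Since $\Pi'(P)\Pi(P)=0$, flipping either factor yields an orthogonal projector, which is not equal. Therefore $M$ commutes with $C$ if and only if both factors are unflipped, i.e.\ $M_a\in\{I,P\}$ and $M_b\in\{I,Q\}$. Qubits outside $\{a,b\}$ impose no constraint, which gives the \dquote{overlapping qubits} phrasing of the equivalent form.

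\textbf{Main obstacle.} The delicate point is Step 1: for special angles such as $\theta=\pi/2$, $R_M$ is proportional to $M$ and extra commutation can spuriously occur (e.g.\ when $s=-1$). We handle this by stating the result for the generic or parameterized rotation, as is implicit in the paper's scheduling use. The converse direction in Step 3 additionally needs the nonzero-product argument that a flipped projector differs from the original.
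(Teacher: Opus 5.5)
Your proof is correct and follows the route the paper indicates: you test commutation of $M$ against the projection form $C_{a,b}(P,Q)=I-2\Pi(P)_a\Pi(Q)_b$ and observe that conjugation by $M$ either preserves each projector factor or flips it to its orthogonal complement. One correction to your Step~1 remark: since $[e^{-i\theta M},C]=-i\sin\theta\,[M,C]$, the only degenerate angles are $\theta\in\pi\mathbb{Z}$ (where $R_M=\pm I$), whereas at $\theta=\pi/2$ you have $R_M=-iM$ and no spurious commutation occurs, so you need not exclude odd multiples of $\pi/2$.
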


% \begin{proposition}[UCG--UCG commutation]\label{prop:ucg-ucg-commutation}
% Two UCG operations commute if and only if they carry the exact same Pauli letter on every qubit they share. Thus, disjoint UCGs always commute, two UCGs on the identical qubit pair commute if both Pauli letter match, and two UCGs sharing one qubit commute if their Pauli letters on that intersecting qubit agrees.
% \end{proposition}

% \begin{proposition}[UCG--Pauli-rotation commutation]\label{prop:ucg-pauli-commutation}
% A nontrivial Pauli rotation $R_M(\theta)=e^{-i\theta M}$ commutes with a UCG $C_{a,b}(P,Q)$ if and only if $M_a\in\{I,P\}$ and $M_b\in\{I,Q\}$. Equivalently, the Pauli rotation must carry the UCG's Pauli letter on every overlapping qubit; its letters on non-overlapping qubits are irrelevant.
% \end{proposition}

\symphony\ constructs a directed dependency graph by evaluating the interleaved sequences $\mathcal{E}$ and $\mathcal{C}$ against these commutativity criteria. An edge-coloring-inspired heuristic subsequently assigns dependency-ready, resource-disjoint operations to the same timeslot, thereby preserving frame-induced causal precedences while exploiting extensive two-qubit block parallelism from the serial emission stream.

\subsection{Complexity Analysis}

A critical advantage of \symphony\ is its guaranteed polynomial scaling. With each iteration, \Cref{algo:holistic-bsf-simp} evaluates a bounded $\mathcal{O}(n^2)$ search space of UCG Clifford candidates against the target row $P_\tau$. Computing the global tableau weight differential costs $\mathcal{O}(m)$ per candidate, resulting in an $\mathcal{O}(mn^2)$ time complexity to discover the optimal simplifying Clifford per step. Because each chosen Clifford strictly reduces the target row's weight, the maximum number of iterations is bounded by the tableau's total initial weight, $\mathcal{O}(mn)$. Consequently, the complete forward-frame BSF simplification phase operates 
\appendixnote{within a highly tractable $\mathcal{O}(m^2n^3)$ upper-bound complexity.}
% within an $\mathcal{O}(m^2n^3)$ complexity.
Combined with the low-overhead scheduling, this positions \symphony\ as a computationally lightweight compiler ideally suited for large-scale Hamiltonian simulation programs.

\section{Evaluation}\label{sec:evaluation}

We evaluate \symphony\ against prior SOTA compilers~\cite{li2019tackling,jin2024tetris,goubault2024faster,schmitz2024graph,yang2025phoenix,liu2025quclear} on a diverse set of Hamiltonian simulation programs. \symphony\ is implemented in Python on top of Qiskit and NumPy.
% We conduct experiments on MacOS with Apple M3 MAX CPU and 36GB RAM.
We conduct experiments on a dual-socket server equipped with two 96-core AMD EPYC 9684X processors and 1.5 TiB of RAM.

\subsection{Experimental Settings}

% \begin{table*}[tbp]
%     \centering
%     \caption{Summary of the HamLib benchmark suite.}
%     \label{tab:hamlib-benchmark-description}
%     % \setlength{\tabcolsep}{4pt}
%     \small\input{tables/hamlib_suite_summary.tex}    
% \end{table*}

\begin{table}[tbp]
    \centering
    \caption{Summary of the evaluated benchmarks. (a) The HamLib suite comprises 100 Hamiltonians across four application categories: Binary Optimization (15), Discrete Optimization (15), Chemistry (35), and Condensed Matter (35). (b) The UCCSD suite contains six synthetic Hamiltonians.}

    \subfloat[Summary of the HamLib benchmark suite\label{tab:hamlib-benchmark-description}]{
    \begin{footnotesize}
        % \begin{tabular}{|l|c|c|c|c|c|c|}
% \hline
% Category & \#Hamiltonian & \#Qubit & \#Pauli & Weight & 2Q gate count & 2Q circuit depth \\
% \hline
% Binary optimization & 15 & 4--90 & 12--2958 & 2--3 & 24--5916 & 24--4268 \\
% \hline
% Discrete optimization & 15 & 8--480 & 48--5334 & 2--8 & 128--24756 & 44--22272 \\
% \hline
% Chemistry & 35 & 2--64 & 6--6508 & 2--24 & 8--137080 & 4--125891 \\
% \hline
% Condensed matter & 35 & 12--930 & 32--8241 & 2--361 & 32--42768 & 28--38201 \\
% \hline
% \emph{All} & 100 & 2--930 & 6--8241 & 2--361 & 8--137080 & 4--125891 \\
% \hline
% \end{tabular}

\begin{tabular}{|l|c|c|c|c|c|c|}
\hline
Category (\#) & \#Qubit & \#Pauli & Wt. & \makecell{2Q gate\\count} & \makecell{2Q circuit\\depth} \\
\hline
Binary (15) & 4--90 & 12--2958 & 2--3 & 24--5916 & 24--4268 \\
\hline
Discrete (15) & 8--480 & 48--5334 & 2--8 & 128--24756 & 44--22272 \\
\hline
Chem. (35) & 2--64 & 6--6508 & 2--24 & 8--137080 & 4--125891 \\
\hline
Cond. (35) & 12--930 & 32--8241 & 2--361 & 32--42768 & 28--38201 \\
\hline
\emph{All} (100) & 2--930 & 6--8241 & 2--361 & 8--137080 & 4--125891 \\
\hline
\end{tabular}

    \end{footnotesize}
    }

    \subfloat[Summary of the UCCSD benchmark suite\label{tab:uccsd-benchmark-description}]{
    \begin{footnotesize}
        \begin{tabular}{|l|c|c|c|c|c|}
\hline
Bench. & \#Qubit & \#Pauli & Wt. & \makecell{2Q gate\\count} & \makecell{2Q circuit\\depth} \\
\hline
UCC-10 & 10 & 800 & 4--7 & 7328 & 7273 \\
\hline
UCC-15 & 15 & 1800 & 4--15 & 24720 & 24198 \\
\hline
UCC-20 & 20 & 3200 & 4--20 & 57664 & 55729 \\
\hline
UCC-25 & 25 & 5000 & 4--25 & 115888 & 112104 \\
\hline
UCC-30 & 30 & 7200 & 4--30 & 189840 & 182722 \\
\hline
UCC-35 & 35 & 9800 & 4--35 & 301072 & 289930 \\
\hline
\end{tabular}

    \end{footnotesize}
    }
\end{table}

\subsubsection{Baselines}

We compare \symphony\ against six leading baseline compilers: \ding{172} \qiskit-\rustiq\ uses Qiskit's built-in implementation of Rustiq~\cite{goubault2024faster}; \ding{173} \tket-\pcoast\ uses the Pauli Frame Graph synthesis method from \citet{schmitz2024graph} combined with final Clifford resynthesis from PCOAST~\cite{paykin2023pcoast}, specifically employing the built-in \code{GreedyPauliSimp} pass that significantly outperforms the conventional \code{PauliSimp} pass~\cite{cowtan2019phase} commonly used in prior work; \ding{174} \paulihedral~\cite{li2022paulihedral} and \ding{175} \tetris~\cite{jin2024tetris} are two representative methods that exploit opportunities for gate cancellation and routing-synthesis co-optimization across different Pauli-IR synthesis variants; \ding{176} \quclear\ heuristically extracts intrinsic Clifford transformations to maximize Pauli-IR commutation and CNOT-tree synthesis efficiency, though we explicitly retain and synthesize its terminal Clifford operations---rather than absorbing them into measurement observables---to ensure all methods are compared fairly as executable unitary circuits with identical output semantics; \ding{177} \phoenix\ adopts a support-grouped BSF simplification and inter-group ordering strategy~\cite{yang2025phoenix}. 
% Both baselines and \symphony\ are appended by Qiskit's O3 circuit-level optimization pass with two-qubit peephole optimal Clifford subcircuit resynthesis~\cite{bravyi2021hadamardfree}.

All circuits produced by these high-level compilers are further processed by Qiskit's circuit-level optimization, including local cancellation, optimal three-qubit Clifford resynthesis~\cite{bravyi2021hadamardfree}, and two-qubit block consolidation and CNOT-optimal decomposition. In the limited-connectivity compilation experiments, the target coupling map is supplied to the topology-aware \paulihedral\ and \tetris; all outputs are ultimately mapped to the same coupling map by Qiskit's topology-aware O3 \code{transpile} pass.

% \note{The external baselines span Pauli-network synthesis, block-level Pauli-IR synthesis, and Clifford-extraction-based optimization}

\subsubsection{Benchmarks}

Our primary evaluation uses 100 representative Hamiltonian simulation programs from Benchpress~\cite{nation2025benchmarking}, recently featured in Qiskit's performance benchmarking. Originally sourced from the HamLib library~\cite{sawaya2024hamlib}, these Hamiltonians span diverse domains, including quantum chemistry, condensed matter physics, discrete optimization, and binary optimization. This diverse suite encapsulates a wide range of Hamiltonian types and complexities (fermionic electronic-structure and Fermi-Hubbard instances, bosonic/vibrational and Bose-Hubbard instances, spin models such as the transverse-field Ising and Heisenberg models, etc.), varying number of qubits and interaction terms, with detailed characteristics summarized in \Cref{tab:hamlib-benchmark-description}. We further adopt a supplementary set of six synthetic benchmarks from variational UCCSD ans\"atze to assess scalability on large-scale instances. As listed in \Cref{tab:uccsd-benchmark-description}, they are denoted by UCC-10 through UCC-35, generated by randomly sampling $n^2$ blocks from original molecular UCCSD ans\"atz constructed by PySCF, which are adopted from \tetris\ evaluation~\cite{jin2024tetris}.

\subsubsection{Metrics}
We mainly report the two-qubit (i.e., $\mathrm{CNOT}$) gate count and circuit depth, since single-qubit gates are generally of constant overhead relative to two-qubit gates and the circuit-level $\{\mathrm{U3},\mathrm{CNOT}\}$ representation do not exhibit hardware primitives in the NISQ regime. The na\"ively synthesized circuits through CNOT trees serve as the reference for reported optimization rates in terms of gate count or circuit depth. For early fault-tolerant benchmarking, the $T$ gate or $R_Z$ gate count and depth are more critical metrics than inexpensive Clifford gates. For cross-compiler comparisons, we highlight the geometric mean of per-program ratios between each baseline and \symphony\ over the common instance subset.

\subsection{Main Results}

\begin{table}[t]
    \caption{Geometric mean of optimization rates on the HamLib suite. (QuCLEAR timed out on 4 of the 100 programs).}
    \label{tab:hamlib}
    \centering
    \setlength{\tabcolsep}{3.4pt}
    {%
    \captionsetup[subfloat]{position=top}%
    \subfloat[Average optimization rate in terms of 2Q gate count\label{tab:hamlib-count}]{%
\begin{footnotesize}
\begin{tabular}{|c|c|c|c|c|c|c|c|}
\hline
Category (\#) & Qiskit & TKET & PH. & Tetris & QuC. & Phoenix & Symp. \\
\hline
Binary (15) & 1.252 & 0.886 & \textbf{0.669} & 0.715 & 1.598 & 0.718 & 0.732 \\
\hline
Discrete (15) & 0.542 & 0.556 & \textbf{0.524} & 0.751 & 0.588 & 0.753 & 0.688 \\
\hline
Chem. (35) & 0.292 & 0.276 & 0.346 & 0.496 & 0.354 & 0.327 & \textbf{0.226} \\
\hline
Cond. (35) & 1.084 & 0.855 & 0.517 & 0.678 & 0.83 & 0.493 & \textbf{0.467} \\
\hline
\emph{All (100)} & \emph{0.63} & \emph{0.542} & \emph{0.468} & \emph{0.622} & \emph{0.639} & \emph{0.482} & \emph{\textbf{0.411}} \\
\hline
\end{tabular}%
\end{footnotesize}%
}

\subfloat[Average optimization rate in terms of 2Q circuit depth\label{tab:hamlib-depth}]{%
\begin{footnotesize}
\begin{tabular}{|c|c|c|c|c|c|c|c|}
\hline
Category (\#) & Qiskit & TKET & PH. & Tetris & QuC. & Phoenix & Symp. \\
\hline
Binary (15) & 1.277 & 0.626 & 0.309 & 0.529 & 1.603 & 0.261 & \textbf{0.206} \\
\hline
Discrete (15) & 0.595 & 0.35 & 0.413 & 1.728 & 0.641 & 0.377 & \textbf{0.179} \\
\hline
Chem. (35) & 0.218 & 0.182 & 0.354 & 0.397 & 0.273 & 0.236 & \textbf{0.134} \\
\hline
Cond. (35) & 0.357 & 0.799 & 0.621 & 0.123 & 0.724 & 0.069 & \textbf{0.031} \\
\hline
\emph{All (100)} & \emph{0.393} & \emph{0.405} & \emph{0.432} & \emph{0.342} & \emph{0.563} & \emph{0.167} & \emph{\textbf{0.09}} \\
\hline
\end{tabular}%
\end{footnotesize}%
}

    }
\end{table}

\begin{figure*}[tbp]
    \centering
    \subfloat[\qiskit-\rustiq\ vs. \symphony]{
        \includegraphics[width=0.315\linewidth]{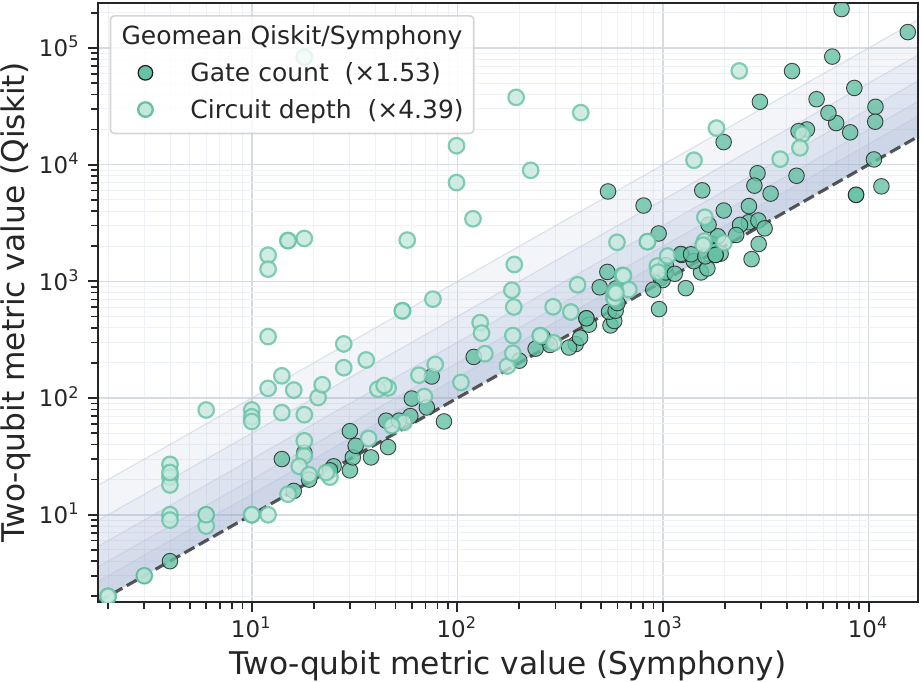}
        \label{fig:hamlib-qiskit-vs-symphony}
    }\hfil
    \subfloat[\tket-\pcoast\ vs. \symphony]{
        \includegraphics[width=0.315\linewidth]{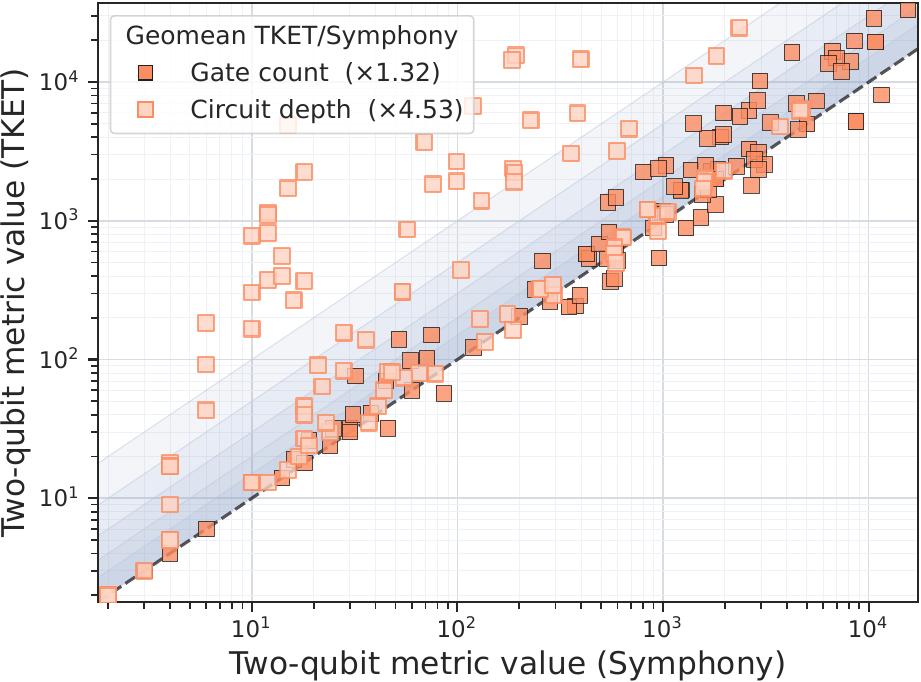}
        \label{fig:hamlib-tket-vs-symphony}
    }\hfil
    \subfloat[\paulihedral\ vs. \symphony]{
        \includegraphics[width=0.315\linewidth]{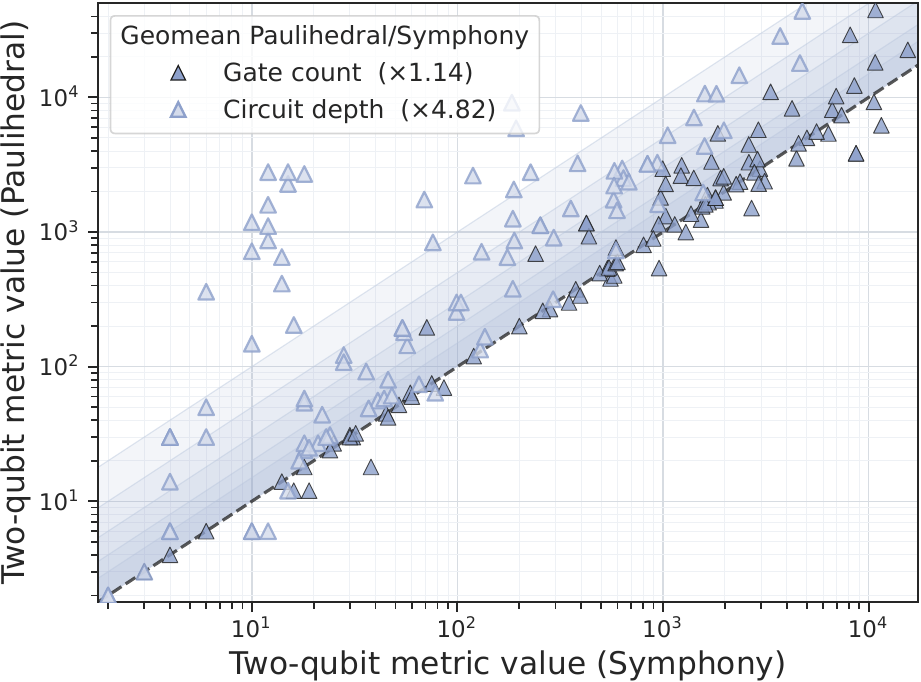}
        \label{fig:hamlib-paulihedral-vs-symphony}
    }\hfil
    \subfloat[\tetris\ vs. \symphony]{
        \includegraphics[width=0.315\linewidth]{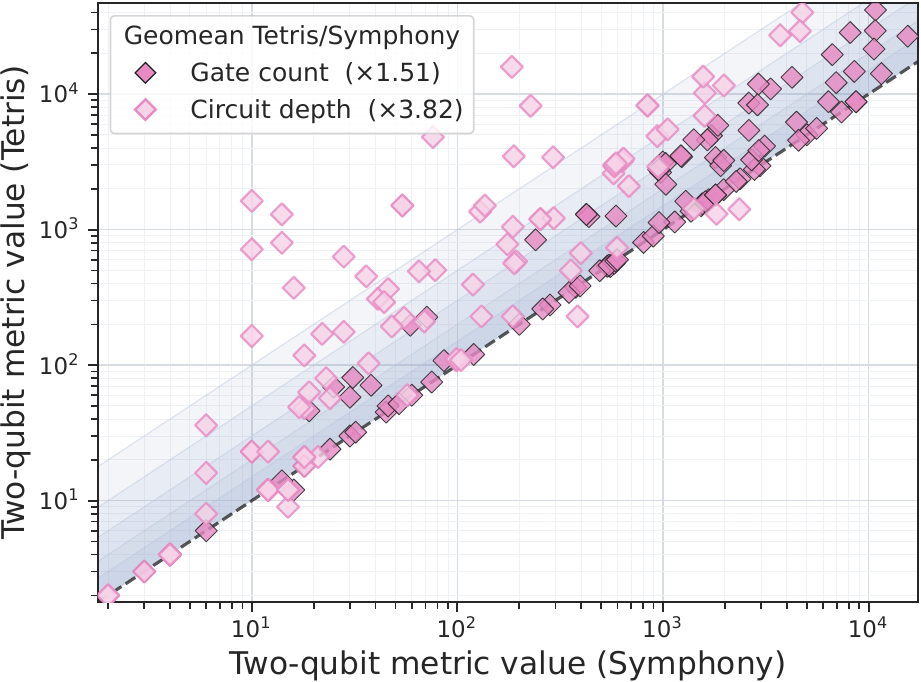}
        \label{fig:hamlib-tetris-vs-symphony}
    }\hfil
    \subfloat[\quclear\ vs. \symphony]{
        \includegraphics[width=0.315\linewidth]{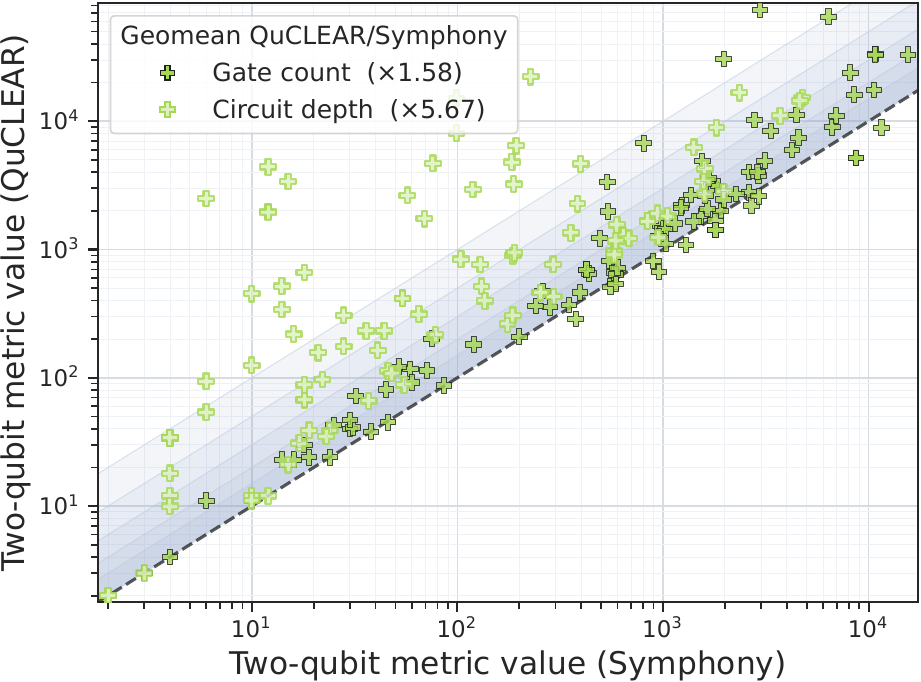}
        \label{fig:hamlib-quclear-vs-symphony}
    }\hfil
    \subfloat[\phoenix\ vs. \symphony]{
        \includegraphics[width=0.315\linewidth]{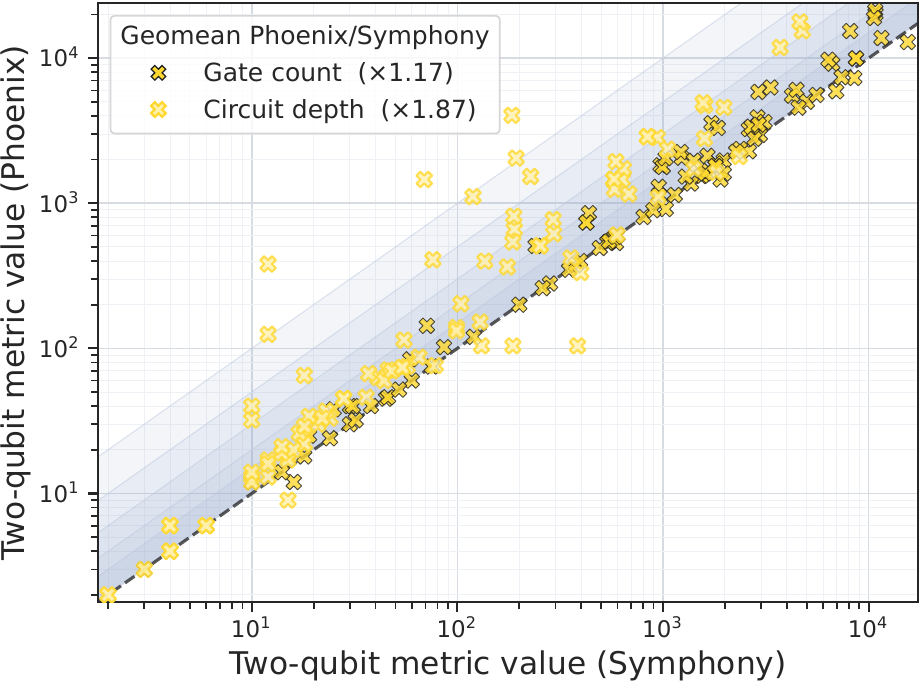}
        \label{fig:hamlib-phoenix-vs-symphony}
    }\hfil
    \caption{Pairwise comparison of \symphony\ against six SOTA baselines on the 100-program HamLib benchmark suite. The dashed diagonal line denotes equal performance; points above it favor \symphony; shaded diagonal bands serve as multiplicative-ratio guides (1--1.5$\times$, 1.5--2$\times$, 2--3$\times$, 3--5$\times$, and 5--10$\times$, respectively). Both gate count and circuit depth metrics count only two-qubit gates.}
    \label{fig:hamlib}
\end{figure*}

\Cref{tab:hamlib} and \Cref{fig:hamlib} summarize the main results on the 100 Hamiltonians from HamLib. \symphony\ attains geometric-mean optimization rates of $0.411$ for two-qubit gate count and $0.09$ for two-qubit depth, corresponding to reductions of $59\%$ and $91\%$, respectively, relative to na\"ive per-term synthesis. Compared to all baselines, \symphony\ achieves the best aggregate optimization rates for both metrics, requiring $1.14$--$1.58\times$ fewer two-qubit gates and $1.87$--$5.67\times$ shallower two-qubit depth, where the ratios are geometric means of per-program baseline-to-\symphony\ ratios on the common instance subsets. 
% Besides, \quclear\ timeout on 4 benchmarks, respectively. 
Remarkably, compared to \phoenix\ which is in a similar compilation paradigm but uses grouping-then-ordering, the aggregate gate-count and depth optimization rates improve from $0.482$ to $0.411$ and from $0.167$ to $0.09$, respectively. This substantial margin empirically validates the necessity and power of grouping-free, holistic synthesis.

A breakdown by benchmark family reveals that \symphony's superiority in circuit depth is exceptionally consistent. It achieves the best two-qubit depth optimization rate in every single category: 0.206 (binary optimization), 0.179 (discrete optimization), 0.134 (chemistry), and 0.031 (condensed matter). It is most pronounced on condensed-matter instances, where \symphony's $0.031$ optimization rate corresponds to per-program geometric-mean depth ratios of $2.22\times$ versus the next-best \phoenix\ result and $3.93\times$ versus \tetris. The pairwise plots in \Cref{fig:hamlib} further show that these gains are distributed across individual programs rather than driven by a small number of outliers. We acknowledge that \symphony\ does not achieve the lowest gate count in every sub-category; for instance, \paulihedral\ retains marginal count advantages on binary and discrete optimization workloads. Nevertheless, \symphony's universal supremacy in circuit depth, coupled with its best-in-class aggregate gate count across the full suite, firmly establishes its Pareto dominance.

\subsection{Performance on Limited-Connectivity Backends}

\begin{table}[t]
    \caption{Compilation results on UCCSD programs across both all-to-all and limited-connectivity backends.}
    \label{tab:uccsd}
    \centering
    \setlength{\tabcolsep}{3.8pt}
    {%
    \captionsetup[subfloat]{position=top}%
    \subfloat[Average optimization rate in terms of 2Q gate count\label{tab:uccsd-count}]{%
\begin{footnotesize}
\begin{tabular}{|c|c|c|c|c|c|c|c|}
\hline
Topo. & Qiskit & TKET & PH. & Tetris & QuC. & Phoenix & Symphony \\
\hline
All2all & 0.256 & 0.175 & 0.299 & 0.311 & 0.228 & 0.133 & \textbf{0.131} \\
\hline
Square & 1.117 & 0.691 & 0.427 & 0.61 & 0.95 & \textbf{0.395} & 0.452 \\
\hline
HHex & 1.976 & 1.168 & 0.914 & 0.764 & 1.625 & \textbf{0.571} & 0.76 \\
\hline
\end{tabular}%
\end{footnotesize}%
}

\subfloat[Average optimization rate in terms of 2Q circuit depth\label{tab:uccsd-depth}]{%
\begin{footnotesize}
\begin{tabular}{|c|c|c|c|c|c|c|c|}
\hline
Topo. & Qiskit & TKET & PH. & Tetris & QuC. & Phoenix & Symphony \\
\hline
All2all & 0.153 & 0.059 & 0.307 & 0.309 & 0.102 & 0.112 & \textbf{0.051} \\
\hline
Square & 0.575 & 0.309 & 0.325 & 0.438 & 0.455 & 0.276 & \textbf{0.21} \\
\hline
HHex & 0.917 & 0.47 & 0.724 & 0.54 & 0.702 & 0.362 & \textbf{0.32} \\
\hline
\end{tabular}%
\end{footnotesize}%
}

    }
\end{table}

We further evaluate whether \symphony's logical-level compilation advantage still holds when applied to restricted hardware topologies by using the UCCSD benchmark suite, whose regular scaling enables a controlled comparison of routing overhead. While \paulihedral\ and \tetris\ incorporate backend topology information and perform co-optimization, \symphony\ and other topology-agnostic compilers assume all-to-all connectivity and delegate qubit mapping to Qiskit.

As \Cref{tab:uccsd} details, \symphony\ secures the best two-qubit depth optimization rate across all topologies---$0.051$ (all-to-all), $0.21$ (square), $0.32$ (heavy-hex). Topology-aware baselines (\paulihedral\ and \tetris) demonstrate strong resilience to restricted connectivity; transitioning from all-to-all to heavy-hex introduces minor depth overheads of 2.36$\times$ and 1.75$\times$, respectively. % (calculated as $0.724/0.307$) and $1.75\times$ ($0.540/0.309$), respectively.
In contrast, \symphony\ experiences a steeper $6.27\times$ (calculated as $0.32/0.051$) routing overhead. However, despite this faster degradation, \paulihedral\ and \tetris\ still trail \symphony\ on the restrictive heavy-hex lattice by $2.26\times$ and $1.69\times$, respectively, as their logical-level compiled circuits are substantially inferior to \symphony.
% Ultimately, robustness to the coupling map cannot recover the optimization opportunities surrendered before routing begins.
Regarding two-qubit gate count, \symphony\ significantly outperforms all baselines on the all-to-all topology. On restricted topologies, however, it trails \phoenix\ by 14--33\% due to larger degradation (e.g., $5.80\times$ versus $4.29\times$ gate count overhead on heavy-hex relative to all-to-all). This highlights a fundamental architectural trade-off: aggressive holistic BSF simplification dissolves the same-support groups preserved by \phoenix, depriving the backend router of exploitable hardware-friendly program patterns.

These results challenge the common premise that introducing connectivity constraints early necessarily improves mapped circuit quality~\cite{li2022paulihedral,jin2024tetris,li2025pauliforest}. We attribute the superiority of hardware-agnostic compilers here to the fact that unrestricted algebraic simplification over logical-level Pauli-IRs yields optimization gains that outweigh subsequent routing overhead, particularly for Hamiltonians such as chemistry simulation generated through fermion-to-qubit mappings with diverse non-local Pauli operators acting on qubit sets. With modern routing algorithms continually advancing, high-level engines can increasingly outsource spatial mapping to maximize purely algebraic compression. However, this separation introduces a distinct trade-off: aggressively dissolving native program patterns complicates qubit routing, causing steeper degradation on limited-connectivity hardware compared to topology-aware or grouping-based baselines like \paulihedral, \tetris, and \phoenix. Furthermore, for specific Hamiltonians exhibiting strong structural patterns and high locality—such as those in condensed matter simulations—topology-aware co-optimization remains highly beneficial~\cite{kattemolle2026efficient}. Therefore, the co-design of layout synthesis and aggressive algebraic optimization is a highly promising avenue for future exploration.

\subsection{Superiority in the Early Fault-Tolerant Regime}

\begin{figure}[tbp]
    \centering
    \includegraphics[width=\columnwidth]{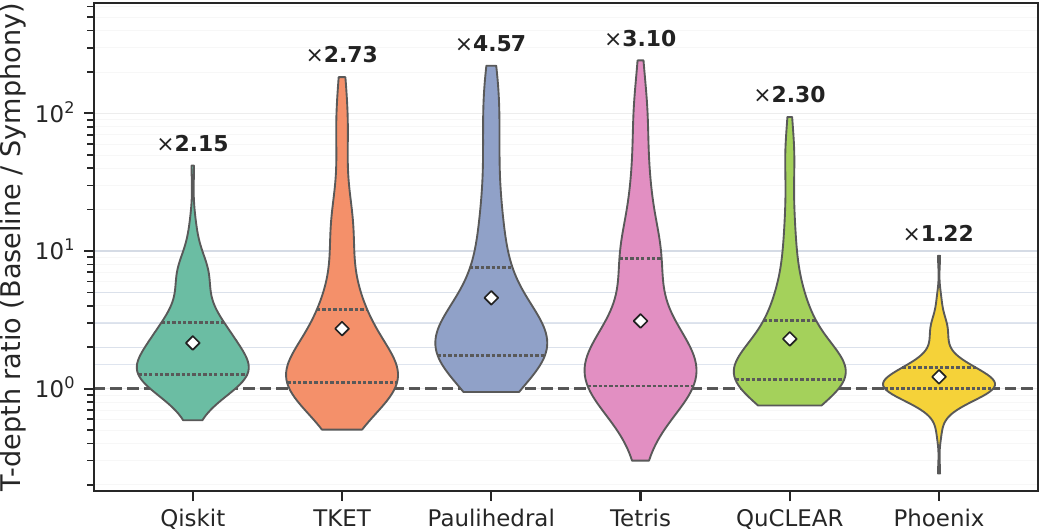}
    \caption{Per-program $T$-depth of each baseline compared to \symphony\ across HamLib. The dashed diagonal represents equal $T$-depth; points above it favor \symphony; shaded regions indicate multiplicative-ratio bounds (1--1.5$\times$, 1.5--2$\times$, 2--3$\times$, 3--5$\times$, and 5--10$\times$).}
    %  \quclear\ timeouts on xxx out of 100 programs.}
    % Legend entries specify the geometric mean of the per-program ratio relative 
    % to \symphony. Note that points are paired per program; thus, the \quclear\ series 
    % reflects the xxx out of 100 programs it successfully compiled.}
    \label{fig:t-depth-comparison}
\end{figure}

In the early fault-tolerant regime, Clifford gates are relatively inexpensive, whereas non-Clifford rotations must be synthesized into sequences of Clifford and $T$ gates, with each $T$ gate consuming costly distilled magic states~\cite{bravyi2005universal,litinski2019magic}. Consequently, $T$-count governs distillation throughput, while $T$-depth dictates pipeline latency~\cite{amy2013meet,litinski2019game}. We recompile Hamiltonian simulation programs without appending circuit-level peephole optimization to avoid generating numerous single-qubit rotations and then synthesize every $R_Z$ rotation into a Clifford$+T$ sequence using GridSynth~\cite{ross2016optimal} with a precision of $10^{-10}$. 
% We exclude \paulihedral\ and \tetris\ from this evaluation as they are not end-to-end compilers as they only process Pauli strings but do not emit complete circuits for synthesis.

%  ($1.015\times$ for \qiskit, $1.019\times$ for \tket, $0.998\times$ for \quclear, and $1.005\times$ for \phoenix). This invariance is structural. 

As expected, $T$-count remains largely compiler-invariant, staying within 2\% across all compilers, because the non-Clifford rotation count is inherently fixed by the underlying Hamiltonian requiring a median of 102.5 $T$ gates per rotation.
%  and any surplus Clifford rotations introduced by the compilers remain $T$-free. 
Consequently, $T$-depth serves as the primary differentiator between compilers. As shown in \Cref{fig:t-depth-comparison}, \symphony\ reduces $T$-depth by geometric-mean factors of 2.15$\times$ over \qiskit, 2.73$\times$ over \tket, 4.57$\times$ over \paulihedral, 3.10$\times$ over \tetris, 2.30$\times$ over \quclear, and 1.22$\times$ over \phoenix. It produces strictly shallower circuits on 86/100, 85/100, 98/99, 86/100, 78/92, and 62/100 common programs, respectively. Given that the $T$-count is comparable for all compilers, this reduction in depth represents a strict performance improvement.
% These gains are primarily concentrated in condensed-matter Hamiltonians, yielding improvements of up to 10.5$\times$. The gains are comparable for quantum chemistry workloads, where globally supported electronic-structure strings and small vibrational instances alike leave minimal schedulable structure.

\subsection{Ablation Study}

\begin{table}[tbp]
    \centering
    \caption{Ablation study on emit weight threshold, with the default setting $\rho=0.35$ as the reference.}
    \label{tab:ablation-emission-threshold}
    \footnotesize% Each parenthesized value is (rate / rho=0.35 rate - 1) * 100.
% Lower optimization rates are better, so negative percentages favor the arm.
% \resizebox{\columnwidth}{!}{%
\begin{tabular}{|l|c|c|c|c|}
\hline
\multirow{2}{*}{\diagbox[width=6.4em,height=3.2\baselineskip]{{Cat. (\#)}}{{Opt. rate}}} & \multicolumn{2}{c|}{$\rho=0.0$ (emit weight $\leq 1$)} & \multicolumn{2}{c|}{$\rho=1.0$ (emit weight $\leq 2$)} \\
\cline{2-5}
& \makecell{Count} & \makecell{Depth} & \makecell{Count} & \makecell{Depth} \\
\hline
Binary (15) & \makecell{0.835\\(+14.05\%)} & \makecell{0.601\\(+191.93\%)} & \makecell{\textbf{0.720}\\(-1.58\%)} & \makecell{\textbf{0.202}\\(-1.95\%)} \\
\hline
Discrete (15) & \makecell{\textbf{0.563}\\(-18.12\%)} & \makecell{0.335\\(+87.47\%)} & \makecell{0.689\\(+0.14\%)} & \makecell{\textbf{0.177}\\(-0.62\%)} \\
\hline
Chem. (35) & \makecell{0.230\\(+1.66\%)} & \makecell{0.144\\(+7.08\%)} & \makecell{0.289\\(+27.92\%)} & \makecell{0.160\\(+18.99\%)} \\
\hline
Cond. (35) & \makecell{0.571\\(+22.27\%)} & \makecell{0.121\\(+288.47\%)} & \makecell{0.467\\(+0.07\%)} & \makecell{\textbf{0.031}\\(-1.25\%)} \\
\hline
\emph{All (100)} & \makecell{\emph{0.439}\\(+6.81\%)} & \makecell{\emph{0.190}\\(+112.52\%)} & \makecell{\emph{0.447}\\(+8.79\%)} & \makecell{\emph{0.094}\\(+5.40\%)} \\
\hline
\end{tabular}
% }

\end{table}

\subsubsection{Impact of Adaptive Two-qubit Emission}

% The threshold interpolates between the two fixed conventions---$\rho=0$ recovers strict single-qubit emission, $\rho=1$ recovers unconditional weight-2 emission

\Cref{tab:ablation-emission-threshold} validates the adaptive emission mechanism by comparing the default weight density threshold ($\rho=0.35$) against its degenerate endpoints. Strict single-qubit emission ($\rho=0$)---the conventional terminal condition in tableau-based synthesis~\cite{goubault2024faster,yang2025phoenix,kuo2026unified}---proves counterproductive, degrading aggregate two-qubit depth by $112.5\%$ and gate count by $6.8\%$. This penalty is most severe on condensed matter ($+288.5\%$ depth) and binary optimization ($+191.9\%$) instances, where forcing native 2-local and 3-local terms down to weight-1 invariably inflates surrounding operators. Conversely, unconditionally emitting every weight-2 row ($\rho=1$) largely recovers the depth penalty but surrenders $8.8\%$ in aggregate gate count, degrading chemistry workloads by $27.9\%$ where dense strings still reward further simultaneous simplification. Because neither fixed endpoint is universally optimal, dynamically tracking residual tableau density is essential. In practice, \symphony\ can also sweep a range of $\rho$ in parallel to identify the optimal compilation scheme.

\begin{table}[tbp]
    \centering
    \caption{Ablation study on ASAP and commutativity-optimized scheduling across HamLib benchmark suite. Trivially ordered circuit results are used as the reference. Only improvements in two-qubit circuit depth are shown.}
    \label{tab:ablation-scheduling}
    \footnotesize% Avg. is the geometric mean of the per-case 2Q-depth ratio
% (ASAP on / ASAP off).  Max. is the largest single-case depth reduction.
% Negative percentages denote a reduction in 2Q circuit depth.
\begin{tabular}{|l|c|c|c|c|}
\hline
\multirow{2}{*}{\diagbox[width=6.5em,height=3.2\baselineskip]{{Cat. (\#)}}{{Improv.}}} & \multicolumn{2}{c|}{ASAP w/o Commute} & \multicolumn{2}{c|}{ASAP w/ Commute} \\
\cline{2-5}
% & \makecell{\makebox[3.8em]{Avg.}} & \makecell{\makebox[3.8em]{Max.}} & \makecell{\makebox[3.8em]{Avg.}} & \makecell{\makebox[3.8em]{Max.}} \\
& \makecell{\makebox[4em]{Avg.}} & \makecell{\makebox[4em]{Max.}} & \makecell{\makebox[4em]{Avg.}} & \makecell{\makebox[4em]{Max.}} \\
\hline
Binary (15) & -49.6\% & -67.0\% & -56.0\% & -74.3\% \\
\hline
Discrete (15) & -51.3\% & -77.0\% & -60.0\% & -77.0\% \\
\hline
Chem. (35) & -19.4\% & -86.7\% & -22.3\% & -86.7\% \\
\hline
Cond. (35) & -83.8\% & -99.1\% & -84.8\% & -99.1\% \\
\hline
\emph{All (100)} & \emph{-60.3\%} & \emph{-99.1\%} & \emph{-63.5\%} & \emph{-99.1\%} \\
\hline
\end{tabular}

\end{table}

\subsubsection{Impact of Causality-Preserving Rescheduling}

% To isolate the algorithmic contribution of our exact commutation calculus, we perform an ablation study comparing standard ASAP scheduling against our commutativity-optimized variant. For each configuration, \Cref{tab:ablation-scheduling} reports the relative two-qubit depth reduction compared to a non-parallelized baseline. Employing conservative ordering dependencies (ASAP w/o Commute) inherently reduces two-qubit depth by an average of $60.3\%$ across the 100 HamLib programs. However, when we relax execution constraints by integrating our precise algebraic rules for commuting move--move and move--block pairs (ASAP w/ Commute), the average depth reduction expands to $63.5\%$.

% Crucially, this enhanced parallelism benefits every single benchmark family. The most substantial average gains occur in the discrete optimization (from $51.3\%$ to $60.0\%$) and binary optimization (from $49.6\%$ to $56.0\%$) suites, where commutativity rules successfully unlock previously blocked gate schedules. Furthermore, while the average reduction in condensed matter appears largely saturated (improving marginally from $83.8\%$ to $84.8\%$), individual instances within this family achieve up to a staggering $99.1\%$ reduction in two-qubit circuit depth. These results empirically validate that our exact commutation rules expose critical scheduling freedom well beyond the capabilities of generic ASAP layering.

% \ZY{Why ASAP works much better than just natural sequence ordering? (mechanism explanation)}

\Cref{tab:ablation-scheduling} isolates the rescheduling pass, reporting two-qubit depth relative to a trivially ordered emission stream. Simply honoring frame-induced precedences (ASAP w/o Commute) removes $60.3\%$ of the depth on average across HamLib, confirming that holistic peeling emits narrowly supported blocks that are largely mutually independent. Layering our exact algebraic commutation relations on top (ASAP w/ Commute) further relaxes conservative dependencies, widening the average depth reduction to $63.5\%$. This improvement spans every benchmark family, shining brightest on discrete ($51.3\%\to60.0\%$) and binary optimization ($49.6\%\to56.0\%$), which feature mutually commuting low-weight terms that purely structural models needlessly serialize. Even in near-saturated condensed-matter workloads ($83.8\%\to84.8\%$), individual instances achieve up to a $99.1\%$ depth reduction. These results validate that exact commutativity analytics successfully unlock scheduling freedom beyond the reach of generic structural ASAP scheduling.

\subsection{Runtime Analysis}

\begin{figure}[tbp]
    \centering
    \includegraphics[width=\columnwidth]{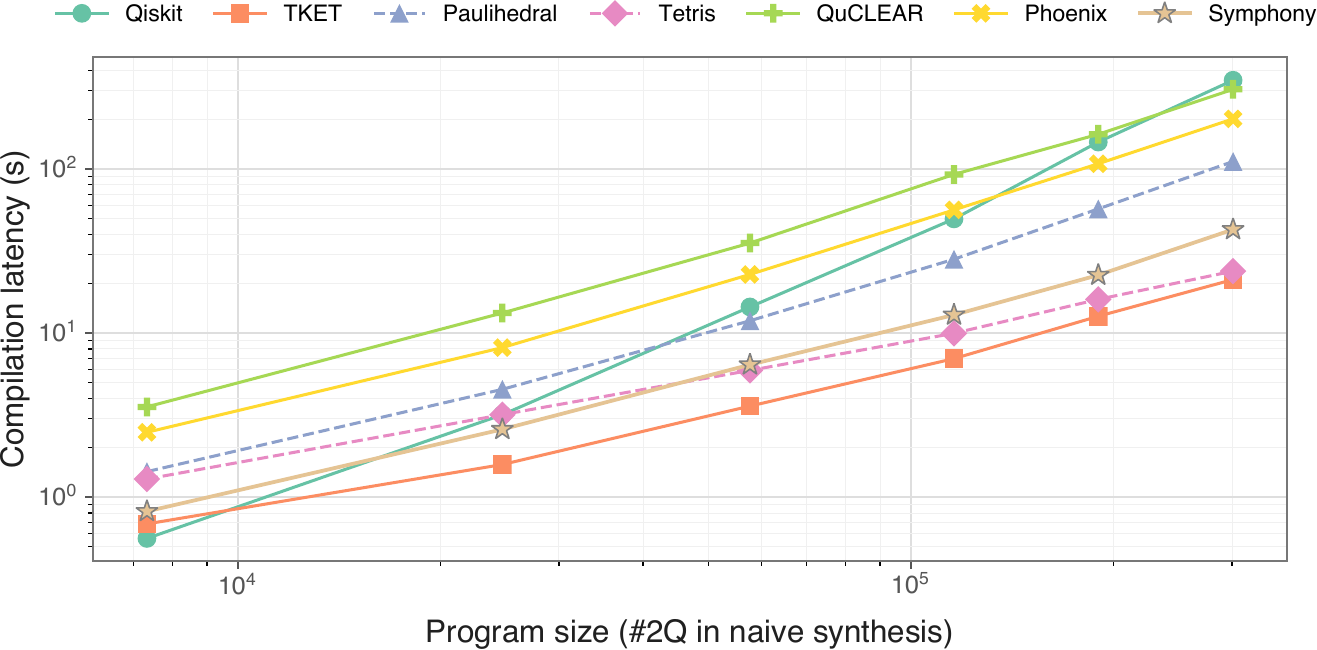}
    \caption{Compilation latency comparison. Solid lines denote end-to-end compilers generating equivalent circuits, whereas dashed lines represent non-end-to-end baselines (\paulihedral\ and \tetris) that only process Pauli strings.}
    \label{fig:compilation-latency}
\end{figure}

To evaluate runtime scalability, we benchmark all compilers across the scaled UCCSD Hamiltonians and report the wall-clock compilation latency in \Cref{fig:compilation-latency}. Overall, \symphony\ exhibits a latency scaling curve that closely aligns with its theoretical polynomial complexity, achieving significantly lower absolute runtimes than most baselines, trailing only \tket-\pcoast\ and \tetris\ on large-scale Hamiltonians. It is worth noting that the core transformation passes of \qiskit-\rustiq\ and \tket-\pcoast\ are accelerated by Rust and C++, respectively. Furthermore, \paulihedral\ and \tetris\ are evaluated solely on their Pauli-string processing; they do not instantiate fully executable circuits with continuous rotation angles, a structural distinction we denote with dashed lines. Surprisingly, although the industry-standard, Rust-accelerated \qiskit-\rustiq\ demonstrates the best computational efficiency for small-scale programs, it exhibits the poorest scalability when handling large-scale workloads.

\section{Related Work}\label{sec:related_work}

\paragraph{Gate cancellation between Pauli-IR synthesis variants}
A substantial body of work preserves Hamiltonian- or Pauli-level semantics beyond front-end lowering to expose optimization opportunities that are otherwise obscured at the gate level. \citet{tomesh2021optimized} partition Hamiltonian terms into mutually commuting sets and apply graph-coloring- and traveling-salesman-based reordering to balance simulation accuracy with gate cancellation. For 2-local Hamiltonians, 2QAN exploits operator-permutation freedom to co-optimize scheduling, routing, and native-gate selection~\cite{lao20222qan}. Paulihedral introduces a block-wise Pauli intermediate representation and coordinates instruction scheduling with synthesis, gate cancellation, and qubit mapping~\cite{li2022paulihedral}. Tetris further refines the Pauli-string representation to expose two-qubit-gate cancellation and introduces bridge-aware hardware mapping~\cite{jin2024tetris}, while PauliForest combines connectivity-aware synthesis with Pauli-oriented qubit placement~\cite{li2025pauliforest}. These approaches primarily optimize term ordering, parity-tree construction, inter-block cancellation, and hardware mapping while retaining the high-level structure of Hamiltonian simulation programs.
% These approaches establish the value of retaining application-level structure, but their principal optimization levers remain term ordering, parity-tree construction, inter-block cancellation, and hardware mapping, rather than algebraically simplifying the complete Pauli sequence under a shared Clifford frame.

% \paulihedral~\cite{li2022paulihedral} and
% \tetris~\cite{jin2024tetris} belong to the same broad Pauli-IR synthesis
% category as \symphony: they preserve Pauli rotations before gate lowering to
% expose cross-rotation cancellation and sharing. \paulihedral\ compiles
% selected blocks, while \tetris\ further co-optimizes synthesis, ordering, and
% routing. In contrast, \symphony\ keeps all rotations in one BSF tableau and
% forms a two-qubit block only after global peeling reduces a row to local
% support.

\paragraph{Graph/Diagram-based synthesis}
Another line of work organizes Pauli and Clifford optimization around commuting clusters, graphical intermediate representations, or extractable circuit regions. Simultaneous diagonalization methods partition Pauli operators into mutually commuting clusters and synthesize a shared Clifford basis transformation for each cluster, avoiding the independent implementation of every Pauli exponential~\cite{van2020circuit,mukhopadhyay2023synthesizing}. Phase-gadget synthesis uses ZX-inspired graphical representations to jointly resynthesize collections of multi-qubit rotations into shallow CNOT networks~\cite{cowtan2019phase}. PCOAST generalizes graph-based Pauli optimization to mixed unitary and non-unitary circuits by representing rotations, preparations, and measurements in a common dependency graph and subsequently resynthesizing the optimized graph with a configurable greedy procedure~\cite{paykin2023pcoast}. QuCLEAR instead extracts Clifford subcircuits toward the circuit boundary and absorbs their effects classically when the surrounding computation permits~\cite{liu2025quclear}. 
% Although some of these frameworks employ Pauli-frame synthesis internally, their primary optimization scope is defined by clusters, dependency graphs, graphical gadgets, or extractable Clifford regions.

% Another type of compiler utilizes ZX-diagram, including \tket~\cite{sivarajah2020t,cowtan2019phase}
% and \pcoast~\cite{paykin2023pcoast}. They
% rewrite Pauli gadgets as ZX diagrams to expose phase interactions before
% circuit extraction. Like \symphony, they optimize relations among Pauli
% rotations rather than synthesize each rotation independently. Their rewrite
% space is diagrammatic; \symphony\ instead uses a BSF tableau and two-qubit
% Clifford moves with target descent.

% \pauliforest~\cite{li2025pauliforest} and 2QAN~\cite{lao20222qan} share the
% same Pauli-network objective but include the hardware interaction graph during
% synthesis. \symphony\ performs BSF simplification in the logical all-to-all
% setting, then uses a downstream mapper to realize the target topology.

\paragraph{Pauli network or tableau-based synthesis}
A closely related family formulates the compilation of Pauli rotations as the construction of a sequence of Clifford transformations on Pauli frames. \citet{schmitz2024graph} represents Pauli frames as graph vertices and Clifford operations as transitions, reducing synthesis to a low-cost frame-path problem. Rustiq greedily constructs a Clifford skeleton that reduces target Paulis to single-qubit rotations by first diagonalizing the current target Pauli operator~\cite{goubault2024faster}. The Clifford Executive Representation of \citet{glos2024generic} jointly compiles multiple, not necessarily commuting Pauli operators while incorporating limited connectivity. PHOENIX instead applies CNOT-equivalent two-qubit Clifford transformations to support-grouped BSF tableaux before ordering the simplified groups~\cite{yang2025phoenix}. pMST defers Clifford operations and uses dependency-constrained, minimum-spanning-tree-guided synthesis over a unified Pauli-rotation tableau targeting generic Clifford+$T$ circuit optimization~\cite{kuo2026unified}. More recent work explores learning~\cite{dubal2025paulinetwork} or longer-horizon search~\cite{machiya2026monteq} policies for the same synthesis problem.
% Rustiq formalizes the related Pauli-network problem: it constructs a Clifford skeleton such that every target Pauli rotation is transformed into a single-qubit rotation at some point in the circuit, and proposes greedy heuristics targeting either entangling-gate count or entangling depth under flexible ordering constraints~\cite{goubault2024faster}. PHOENIX expresses collections of Pauli strings in binary symplectic form and iteratively applies two-qubit Clifford transformations to reduce their column weights before ordering the simplified groups~\cite{yang2025phoenix}. 
% More recent work explores learning~\cite{dubal2025paulinetwork} or longer-horizon search~\cite{machiya2026monteq} policies for the same synthesis problem. 
These methods share a common algebraic foundation, binary symplectic representations updated under Clifford conjugation, but differ in whether optimization is formulated as frame-path construction, Pauli-network growth, tableau simplification, or schedule search.

\smallskip
Finally, additional work addresses the Hamiltonian simulation problem upstream of or orthogonal to the synthesis families above. One line of work alters the Pauli-IR presented to the synthesis engine: product-formula design, randomization, and term ordering trade circuit length against simulation error~\cite{campbell2019random,childs2021theory,tranter2019ordering}, while \citet{decker2026kernpiler} partially Trotterizes noncommuting terms and directly resynthesizes the resulting small dense unitaries. A second line exploits application-specific structure: UCC compilers partition commuting Pauli exponentials for shared diagonalization and phase-polynomial synthesis~\cite{cowtan2020generic}; QAOA compilers use the commuting interaction layer to co-optimize ordering, placement, and scheduling~\cite{alam2020circuit,lao20222qan}; tile-based compilation exploits spatial symmetry to optimize routing for translationally invariant condensed-matter simulations~\cite{kattemolle2026efficient}; and chemistry-oriented systems optimize fermion-to-qubit encodings or co-design Pauli structure with the target hardware~\cite{liu2024fermihedral}. A further strand targets $T$-gate overhead directly for early fault-tolerant Hamiltonian simulation~\cite{mukhopadhyay2023synthesizing,li2025non}.

% \symphony belongs to this symplectic Pauli-synthesis family and focuses on global BSF simplification.

% \rustiq~\cite{goubault2024faster}, diagonalization-based synthesis such as
% pMST~\cite{kuo2026unified}, and \symphony\ all use Clifford transformations
% to simplify Pauli networks before lowering the remaining rotations.
% \quclear~\cite{liu2025quclear} similarly extracts and absorbs Clifford
% structure before synthesizing it. \symphony\ differs by scoring native-basis,
% CNOT-equivalent two-qubit Clifford moves on the whole active BSF tableau and
% accepting only moves that reduce a target row. It emits rows at weight two and
% uses commutation information to schedule the emitted blocks.

\section{Conclusion}\label{sec:conclusion}

\symphony\ proves to be an efficient compilation framework that recasts Hamiltonian simulation synthesis as holistic BSF simplification. By applying controlled-Pauli Cliffords as simplification primitives directly across a global BSF tableau, \symphony\ transcends the limitations of restrictive grouping- and diagonalization-based tableau synthesis paradigms. Combined with an adaptive two-qubit block emission mechanism and causality-preserving ASAP scheduling, the framework exploits extensive opportunities for simultaneous simplification and block-level parallelism, unlocking substantial reductions in overall two-qubit gate count and circuit depth.

\begin{acks}
    This research was partially conducted by the AI Chip Center for Emerging Smart Systems (ACCESS), supported by the InnoHK initiative of the Innovation and Technology Commission of the Hong Kong Special Administrative Region Government. It was also supported partially by the Research Grants Council of Hong Kong SAR under Grant No. 16217326, the National Natural Science Foundation of China under Grant No. 92465202, and the Shanghai Institute of Mathematics and Interdisciplinary Sciences under Grant No. SIMIS-ID-2025-QT. Z.~Y.\ would like to thank Xueci Zhang for her helpful suggestions on the paper's visual presentation. D.~D.\ would like to thank God for all of His provisions.
\end{acks}

%%%%%%% -- PAPER CONTENT ENDS -- %%%%%%%%

% use the ACM bibliography style
\bibliographystyle{ACM-Reference-Format}
\bibliography{reference}

%%%%%----- Appendix -----%%%%%%
\appendix

\section{Clifford formalism based on Universal Controlled Gate}\label{sec:appendix_ucg_clifford}

\subsection{Generalized CNOT Gate Definition}

We define the generalized controlled gate associated with a control-axis Pauli operator
$P_1$ on qubit $1$ and a target-axis Pauli operator $P_2$ on qubit $2$ as
\begin{align}
    C_{P_1, P_2}
    =
    \frac{1}{2}(I + P_1) \otimes I
    +
    \frac{1}{2}(I - P_1) \otimes P_2,
\end{align}
where $P_1, P_2 \in \{X, Y, Z\}$.

\subsection{Closed-Form Transformation Rule}

Let $\mathcal{T}_{P_1, P_2}$ denote the Clifford action induced by conjugation with
$C_{P_1, P_2}$. For an arbitrary two-qubit Pauli operator $Q_1 \otimes Q_2$, the transformed
operator is
\begin{align}
    \mathcal{T}_{P_1, P_2}(Q_1 \otimes Q_2)
    =
    \left(Q_1 \cdot P_1^{k_2}\right)
    \otimes
    \left(P_2^{k_1} \cdot Q_2\right),
\end{align}
where the anti-commutation indicators $k_1, k_2 \in \{0,1\}$ are defined by
\begin{align}
    k_1 =
    \begin{cases}
        1, & \text{if } \{Q_1, P_1\} = 0, \\
        0, & \text{otherwise},
    \end{cases}
    \quad
    k_2 =
    \begin{cases}
        1, & \text{if } \{Q_2, P_2\} = 0, \\
        0, & \text{otherwise}.
    \end{cases}
\end{align}
Here, $k_1$ captures the relation between the input operator on the control qubit and the
control reference axis $P_1$, while $k_2$ captures the relation between the input operator on
the target qubit and the target reference axis $P_2$. We use the convention $P^0 = I$, and all
products follow the standard Pauli algebra, including the phase factors $\pm 1$ and $\pm i$.

\subsection{Why the Rule Works}

The generalized CNOT gate is a Clifford operator, so its action on arbitrary Pauli strings is
fully determined by its action on single-qubit Pauli generators. The transformation can be
understood through two complementary propagation rules.
\begin{itemize}[leftmargin=*]
    \item \emph{Control-to-target propagation.} For an operator acting only on the control qubit, we have
    \begin{align}
        C_{P_1, P_2}(Q_1 \otimes I)C_{P_1, P_2}^{\dagger}
        =
        Q_1 \otimes P_2^{k_1}.
    \end{align}
    Therefore, if the input control operator $Q_1$ anti-commutes with $P_1$, the
    conjugation induces $P_2$ on qubit $2$.
    \item \emph{Target-to-control feedback.} For an operator acting only on the target qubit, we have
    \begin{align}
        C_{P_1, P_2}(I \otimes Q_2)C_{P_1, P_2}^{\dagger}
        =
        P_1^{k_2} \otimes Q_2.
    \end{align}
    Therefore, if the input target operator $Q_2$ anti-commutes with $P_2$, the
    conjugation feeds back $P_1$ onto qubit $1$.
\end{itemize}

Combining the two identities gives the closed-form transformation rule above. This expression
uniformly describes both homogeneous and heterogeneous generalized CNOT gates. In practice, the transformation rules can be uniformly applied through symplectic matrix computation or precomputed and stored in a lookup table for efficient application during compilation. The complete set of transformation rules for all nonidentity two-qubit Pauli operators is summarized in \Cref{tab:ucg-pauli-conjugation}.

\subsection{Verification with the Standard CNOT Gate}

Consider the standard CNOT gate, which corresponds to $C_{Z,X}$. In this case,
\begin{align}
    \mathcal{T}_{Z, X}(Q_1 \otimes Q_2)
    =
    \left(Q_1 \cdot Z^{k_2}\right)
    \otimes
    \left(X^{k_1} \cdot Q_2\right).
\end{align}
Representative examples are listed below.
\begin{itemize}[leftmargin=*]
    \item Input $I \otimes X$: $k_1 = 0$ because $I$ commutes with $Z$, and $k_2 = 0$ because
    $X$ commutes with $X$. Hence,
    \[
        \mathcal{T}_{Z,X}(I \otimes X) = I \otimes X.
    \]
    \item Input $X \otimes I$: $k_1 = 1$ because $X$ anti-commutes with $Z$, and $k_2 = 0$
    because $I$ commutes with $X$. Hence,
    \[
        \mathcal{T}_{Z,X}(X \otimes I) = X \otimes X.
    \]
    \item Input $I \otimes Z$: $k_1 = 0$ because $I$ commutes with $Z$, and $k_2 = 1$ because
    $Z$ anti-commutes with $X$. Hence,
    \[
        \mathcal{T}_{Z,X}(I \otimes Z) = Z \otimes Z.
    \]
    \item Input $X \otimes Z$: $k_1 = 1$ and $k_2 = 1$. Therefore,
    \[
        \mathcal{T}_{Z,X}(X \otimes Z)
        =
        (X \cdot Z) \otimes (X \cdot Z)
        =
        (-iY) \otimes (-iY)
        =
        -Y \otimes Y.
    \]
\end{itemize}

These results are fully consistent with the standard CNOT conjugation table and confirm that the
formula correctly captures the transformation behavior in a unified manner.

\begin{table}[tbp]
    \centering
    \caption{Exact UCG Clifford conjugation rules $C_{P_1,P_2}\,M\,C_{P_1,P_2}^\dagger$ for every nonidentity two-qubit Pauli $M=Q_1\otimes Q_2$.}
    \label{tab:ucg-pauli-conjugation}
    \setlength{\tabcolsep}{3.1pt}
    \small\begin{tabular}{|c|ccccccccc|}
    \hline
    $M$ & $C_{XX}$ & $C_{XY}$ & $C_{XZ}$ & $C_{YX}$ & $C_{YY}$ & $C_{YZ}$ & $C_{ZX}$ & $C_{ZY}$ & $C_{ZZ}$ \\
    \hline
    $IX$ & $IX$ & $XX$ & $XX$ & $IX$ & $YX$ & $YX$ & $IX$ & $ZX$ & $ZX$ \\
    $IY$ & $XY$ & $IY$ & $XY$ & $YY$ & $IY$ & $YY$ & $ZY$ & $IY$ & $ZY$ \\
    $IZ$ & $XZ$ & $XZ$ & $IZ$ & $YZ$ & $YZ$ & $IZ$ & $ZZ$ & $ZZ$ & $IZ$ \\
    $XI$ & $XI$ & $XI$ & $XI$ & $XX$ & $XY$ & $XZ$ & $XX$ & $XY$ & $XZ$ \\
    $XX$ & $XX$ & $IX$ & $IX$ & $XI$ & $ZZ$ & $-ZY$ & $XI$ & $-YZ$ & $YY$ \\
    $XY$ & $IY$ & $XY$ & $IY$ & $-ZZ$ & $XI$ & $ZX$ & $YZ$ & $XI$ & $-YX$ \\
    $XZ$ & $IZ$ & $IZ$ & $XZ$ & $ZY$ & $-ZX$ & $XI$ & $-YY$ & $YX$ & $XI$ \\
    $YI$ & $YX$ & $YY$ & $YZ$ & $YI$ & $YI$ & $YI$ & $YX$ & $YY$ & $YZ$ \\
    $YX$ & $YI$ & $-ZZ$ & $ZY$ & $YX$ & $IX$ & $IX$ & $YI$ & $XZ$ & $-XY$ \\
    $YY$ & $ZZ$ & $YI$ & $-ZX$ & $IY$ & $YY$ & $IY$ & $-XZ$ & $YI$ & $XX$ \\
    $YZ$ & $-ZY$ & $ZX$ & $YI$ & $IZ$ & $IZ$ & $YZ$ & $XY$ & $-XX$ & $YI$ \\
    $ZI$ & $ZX$ & $ZY$ & $ZZ$ & $ZX$ & $ZY$ & $ZZ$ & $ZI$ & $ZI$ & $ZI$ \\
    $ZX$ & $ZI$ & $YZ$ & $-YY$ & $ZI$ & $-XZ$ & $XY$ & $ZX$ & $IX$ & $IX$ \\
    $ZY$ & $-YZ$ & $ZI$ & $YX$ & $XZ$ & $ZI$ & $-XX$ & $IY$ & $ZY$ & $IY$ \\
    $ZZ$ & $YY$ & $-YX$ & $ZI$ & $-XY$ & $XX$ & $ZI$ & $IZ$ & $IZ$ & $ZZ$ \\
    \hline
\end{tabular}

\end{table}

\section{Interaction-Rank Criterion for Two-Qubit Pauli-Evolution Blocks}
\label{sec:appendix-interaction-rank}

Here we develop the interaction-rank criterion stated in \Cref{thm:interaction-rank-cnot}. We also give a symbolic alternative to a numerical KAK decomposition for a two-qubit block represented as an exact Pauli evolution.

\subsection{Interaction Matrix}

Let $\vec{\sigma}=(X,Y,Z)^T$.  Every purely nonlocal two-qubit Hamiltonian
can be written as
\begin{align}
    H_J=\sum\nolimits_{\mu,\nu\in\{X,Y,Z\}}J_{\mu\nu}\,
    \sigma_\mu\otimes\sigma_\nu,
    \quad J\in\mathbb{R}^{3\times3},
    \label{eq:interaction-matrix}
\end{align}
where $J$ is its \emph{interaction matrix}.  Its rows and columns are both
ordered as $(X,Y,Z)$; explicitly,
\begin{align}
    J=
    \begin{pmatrix}
        J_{XX} & J_{XY} & J_{XZ}\\
        J_{YX} & J_{YY} & J_{YZ}\\
        J_{ZX} & J_{ZY} & J_{ZZ}
    \end{pmatrix}.
    \label{eq:interaction-matrix-expanded}
\end{align}
Thus, a term $a\,\sigma_\mu\otimes\sigma_\nu$ contributes $a$ to the
entry $J_{\mu\nu}$.

% For example, the block
% \begin{align}
%     H_2=\alpha ZX+\beta YY+\gamma ZZ
%     \label{eq:rank-two-example-hamiltonian}
% \end{align}
% has the interaction matrix
% \begin{align}
%     J_2=
%     \begin{pmatrix}
%         0 & 0 & 0\\
%         0 & \beta & 0\\
%         \gamma & 0 & \alpha
%     \end{pmatrix}.
%     \label{eq:rank-two-example-matrix}
% \end{align}
% The zero $X$ row already makes $\operatorname{rank}(J_2)\leq2$.

\subsection{Local Canonicalization and CNOT Class}

\begin{lemma}[Local action on the interaction matrix]
\label{lem:local-so3-action}
For each $u\in SU(2)$ there is an $R_u\in SO(3)$ satisfying
\begin{align}
    u(\vec r\cdot\vec\sigma)u^\dagger
    =(R_u\vec r)\cdot\vec\sigma.
\end{align}
Consequently, for single-qubit unitaries $u$ and $v$,
\begin{align}
    (u\otimes v)H_J(u^\dagger\otimes v^\dagger)
    =H_{R_u J R_v^T}.
    \label{eq:local-so3-action}
\end{align}
\end{lemma}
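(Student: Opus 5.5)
The plan is to use the standard double cover $SU(2)\to SO(3)$ and linearity of conjugation. Fix $u\in SU(2)$ and consider the real-linear map $\Phi_u:\mathbb{R}^3\to\mathbb{R}^3$ defined through $u(\vec r\cdot\vec\sigma)u^\dagger=(\Phi_u\vec r)\cdot\vec\sigma$. I would first show this map is well defined. Conjugation by $u$ preserves Hermiticity and tracelessness, and traceless Hermitian $2\times2$ matrices are exactly the real span of $X,Y,Z$. So $u(\vec r\cdot\vec\sigma)u^\dagger$ is again of the form $\vec r'\cdot\vec\sigma$ with $\vec r'$ real and uniquely determined, because $X,Y,Z$ are linearly independent. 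Linearity in $\vec r$ is immediate.

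Next I would show $\Phi_u\in SO(3)$. For orthogonality, use $(\vec r\cdot\vec\sigma)^2=\lVert\vec r\rVert^2 I$, or equivalently $\det(\vec r\cdot\vec\sigma)=-\lVert\vec r\rVert^2$. Conjugation preserves the determinant, so $\lVert\Phi_u\vec r\rVert=\lVert\vec r\rVert$ and $\Phi_u\in O(3)$. For the determinant, the map $u\mapsto\det\Phi_u$ is continuous on $SU(2)$, which is connected. It takes values in $\{\pm1\}$ and equals $1$ at $u=I$, so it is identically $1$.

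An alternative for the determinant step avoids the topology. Use $X Y Z=iI$, so the triple product is encoded by $\operatorname{tr}\bigl((\vec a\cdot\vec\sigma)(\vec b\cdot\vec\sigma)(\vec c\cdot\vec\sigma)\bigr)=2i\,\det[\vec a\,\vec b\,\vec c]$. This trace is conjugation invariant, so $\det\Phi_u=1$. I expect this determinant step to be the only point requiring care, and I would present the connectedness argument as the main route.

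For the consequence, write $H_J=\sum_{\mu\nu}J_{\mu\nu}\sigma_\mu\otimes\sigma_\nu$ and use $(u\otimes v)(A\otimes B)(u^\dagger\otimes v^\dagger)=uAu^\dagger\otimes vBv^\dagger$. Taking $\vec r=e_\mu$ gives $u\sigma_\mu u^\dagger=\sum_\alpha (R_u)_{\alpha\mu}\sigma_\alpha$, and similarly $v\sigma_\nu v^\dagger=\sum_\beta (R_v)_{\beta\nu}\sigma_\beta$. Substituting and collecting coefficients of $\sigma_\alpha\otimes\sigma_\beta$, the coefficient is $\sum_{\mu\nu}(R_u)_{\alpha\mu}J_{\mu\nu}(R_v)_{\beta\nu}=(R_uJR_v^T)_{\alpha\beta}$. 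This yields \eqref{eq:local-so3-action} by bilinear bookkeeping.
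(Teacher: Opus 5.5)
Your proposal is correct and follows the paper's route: show that conjugation by $u$ acts on $\operatorname{span}\{X,Y,Z\}$ as a proper orthogonal map, then apply it factor-wise to $H_J$. Your determinant/norm step and your connectedness (or triple-product trace) step spell out what the paper summarizes as preservation of the Hilbert--Schmidt inner product and of the Pauli commutation relations; note also that a global phase cancels under conjugation, so the same $R_u$ works for arbitrary $u\in U(2)$, as the consequence requires.
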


\begin{proof}
Conjugation by a single-qubit unitary preserves the Pauli commutation
relations and the Hilbert--Schmidt inner product.  Its action on the three
dimensional real vector space $\operatorname{span}\{X,Y,Z\}$ is therefore a
proper orthogonal transformation.  Applying this action independently to the
two tensor factors gives \eqref{eq:local-so3-action}.
\end{proof}

\begin{lemma}[Proper-SVD Cartan reduction]
\label{lem:proper-svd-cartan}
There are $R_1,R_2\in SO(3)$ and signed singular values $s_1,s_2,s_3$ with
$|s_1|\geq|s_2|\geq|s_3|$ such that
\begin{align}
    R_1JR_2^T=\operatorname{diag}(s_1,s_2,s_3).
    \label{eq:proper-svd}
\end{align}
Therefore, for suitable single-qubit unitaries $u$ and $v$,
\begin{align}
    e^{-iH_J}\sim_{\mathrm{local}}
    e^{-i(s_1XX+s_2YY+s_3ZZ)}.
    \label{eq:canonical-pauli-evolution}
\end{align}
\end{lemma}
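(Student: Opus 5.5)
The plan is to combine the ordinary real singular value decomposition with a sign correction, so that both orthogonal factors become proper rotations. Then I transfer the result to the unitary level using \Cref{lem:local-so3-action}.

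\textbf{Step 1: a full SVD.} Take any SVD $J = U\Sigma V^T$ with $U,V\in O(3)$ and $\Sigma=\operatorname{diag}(\sigma_1,\sigma_2,\sigma_3)$, ordered so that $\sigma_1\ge\sigma_2\ge\sigma_3\ge 0$.

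\textbf{Step 2: repair the determinants.} Let $D=\operatorname{diag}(1,1,-1)$.
\begin{itemize}
\item If $\det U=-1$, replace $U$ by $UD$ and $\Sigma$ by $D\Sigma$. The product $U\Sigma V^T$ is unchanged because $D^2=I$, and the new $U$ has determinant $+1$.
\item Treat $V$ the same way: replace $V$ by $VD$ and $\Sigma$ by $\Sigma D$.
\end{itemize}
Each correction flips only the sign of the third diagonal entry, and diagonal matrices commute. The result is $J=U'\,\operatorname{diag}(s_1,s_2,s_3)\,V'^T$ with $U',V'\in SO(3)$, $s_1=\sigma_1$, $s_2=\sigma_2$ and $s_3=\pm\sigma_3$. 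The ordering $|s_1|\ge|s_2|\ge|s_3|$ is preserved because only a sign changed. Setting $R_1=U'^T$ and $R_2=V'^T$ gives \eqref{eq:proper-svd}.

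The key point is to place the sign on the smallest singular value, so the magnitude ordering cannot break. The net sign of $s_3$ is $\operatorname{sign}(\det J)$ when $J$ is nonsingular. This is the only genuinely non-routine observation, and it is why the statement says "signed" singular values.

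\textbf{Step 3: lift to single-qubit unitaries.} Use surjectivity of the covering map $SU(2)\to SO(3)$, $u\mapsto R_u$. This is standard, since the adjoint representation of $SU(2)$ is a double cover of $SO(3)$. It gives $u,v\in SU(2)$ with $R_u=R_1$ and $R_v=R_2$.

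By \eqref{eq:local-so3-action},
\begin{equation*}
(u\otimes v)H_J(u\otimes v)^\dagger = H_{R_1JR_2^T} = s_1XX+s_2YY+s_3ZZ.
\end{equation*}

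\textbf{Step 4: pass to the exponential.} Conjugation by a unitary commutes with the exponential, so
\begin{equation*}
(u\otimes v)\,e^{-iH_J}\,(u\otimes v)^\dagger = e^{-i(s_1XX+s_2YY+s_3ZZ)}.
\end{equation*}
This is exactly \eqref{eq:canonical-pauli-evolution}.

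Surjectivity onto $SO(3)$, and not just $O(3)$, is essential. Conjugation cannot implement an improper orthogonal map on $\operatorname{span}\{X,Y,Z\}$, because it would violate $XY=iZ$. This is precisely why Step 2 is needed rather than using the raw SVD.
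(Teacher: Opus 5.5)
Your proposal is correct and follows the paper's route: take a real SVD, absorb a determinant sign into one diagonal entry so both orthogonal factors lie in $SO(3)$, lift them to local unitaries via \Cref{lem:local-so3-action}, and conjugate the exponential. You are more careful than the paper in explicitly invoking surjectivity of $SU(2)\to SO(3)$, which the paper uses tacitly; placing the sign on $s_3$ is a convention rather than the key point, since flipping the sign of any single diagonal entry leaves all magnitudes, and hence the ordering, unchanged.
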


\begin{proof}
Start from a real singular-value decomposition of $J$.  If either orthogonal
factor has determinant $-1$, absorb a sign into one diagonal entry, yielding
the proper rotations in \eqref{eq:proper-svd}; this accounts for the possible
sign of one $s_i$.  By \Cref{lem:local-so3-action}, these rotations are
induced by local unitaries.  Finally, $XX$, $YY$, and $ZZ$ commute pairwise,
so exponentiation preserves the local canonical form.
\end{proof}

\begin{proof}[Detailed proof of \Cref{thm:interaction-rank-cnot}]
By \Cref{lem:proper-svd-cartan}, the three Cartan parameters of $U_J$ are
given, up to the usual Weyl-chamber symmetries, by the signed singular values
in \eqref{eq:proper-svd}.  If $\operatorname{rank}(J)\leq2$, then $s_3=0$.
The resulting unitary lies on the $c_3=0$ Cartan boundary, whose CNOT cost is
at most two (apart from local gates).  Conversely, if
$\operatorname{rank}(J)=3$, then $s_3\ne0$.  Away from the measure-zero set
of angles that folds onto a lower Cartan stratum, all three Weyl coordinates
are nonzero and the unitary is in the generic three-CNOT class.  \qedhere
\end{proof}

\subsection{Successive Rotations on the Product Side}

The interaction-rank criterion addresses a block consolidated into a single
exponential $e^{-iH_J}$.  The following lemma and corollary extend the
at-most-two-CNOT guarantee to ordered products of two weight-2 rotations,
whether or not the two terms commute.

\begin{lemma}[Commutators of Pauli bilinears are local]
\label{lem:bilinear-commutator}
For $A=P_a\otimes P_b$ and $B=P_c\otimes P_d$ with $P_\ell\in\{X,Y,Z\}$,
\begin{align}
    [A,B]
    =2i\,\delta_{ac}\,\varepsilon_{bdf}\,(I\otimes P_f)
     +2i\,\varepsilon_{ace}\,\delta_{bd}\,(P_e\otimes I),
    \label{eq:bilinear-commutator}
\end{align}
with repeated Pauli indices summed: two weight-2 Pauli terms either commute
or have a purely local commutator.
\end{lemma}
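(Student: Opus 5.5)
The plan is a direct computation from the single-qubit Pauli product rule, applied to each tensor factor separately, followed by a check of which terms survive antisymmetrization.

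First I will use the tensor-product multiplication rule $(P_a\otimes P_b)(P_c\otimes P_d)=(P_aP_c)\otimes(P_bP_d)$. On each factor I substitute the standard single-qubit identity $P_aP_c=\delta_{ac}I+i\,\varepsilon_{ace}P_e$, with summation over the repeated index. This gives
\begin{align}
AB=\bigl(\delta_{ac}I+i\varepsilon_{ace}P_e\bigr)\otimes\bigl(\delta_{bd}I+i\varepsilon_{bdf}P_f\bigr).
\end{align}
For $BA$ I swap $a\leftrightarrow c$ and $b\leftrightarrow d$. The deltas are symmetric and the Levi-Civita symbols antisymmetric, so
\begin{align}
BA=\bigl(\delta_{ac}I-i\varepsilon_{ace}P_e\bigr)\otimes\bigl(\delta_{bd}I-i\varepsilon_{bdf}P_f\bigr).
\end{align}

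Next I expand both products into four terms and subtract. There are three kinds of terms:
\begin{itemize}
\item The $\delta\delta$ term is identical in $AB$ and $BA$, so it cancels.
\item The $\varepsilon\varepsilon$ term carries $i\cdot i=-1$ in $AB$ and $(-i)(-i)=-1$ in $BA$. It is therefore also identical and cancels.
\item The two mixed terms, each containing exactly one $\varepsilon$, flip sign between $AB$ and $BA$, so they double.
\end{itemize}
What remains is $2i\,\delta_{ac}\varepsilon_{bdf}\,(I\otimes P_f)+2i\,\varepsilon_{ace}\delta_{bd}\,(P_e\otimes I)$, which is \eqref{eq:bilinear-commutator}. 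Both surviving operators act on a single qubit, so the commutator is local.

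Finally, I will record the case split that gives the ``either commute or purely local'' reading:
\begin{itemize}
\item If $a=c$ and $b=d$, both $\varepsilon$'s vanish and $A$, $B$ commute.
\item If $a\ne c$ and $b\ne d$, both deltas vanish and again $[A,B]=0$. This is consistent with \eqref{eq:pauli-commutation}, since two anticommuting sites make the symplectic product zero.
\item If exactly one site differs, exactly one term survives, and it is a single-qubit Pauli.
\end{itemize}

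There is no real obstacle. The only point needing care is the sign bookkeeping showing that the $\varepsilon\varepsilon$ cross term cancels rather than doubles, so I will write out that step explicitly.
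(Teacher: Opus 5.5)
Your proposal is correct and follows essentially the same route as the paper: you expand $AB$ and $BA$ factor-by-factor with $P_aP_c=\delta_{ac}I+i\varepsilon_{ace}P_e$, observe that the $\delta\delta$ and double-Levi--Civita terms cancel while the mixed terms double, and arrive at \eqref{eq:bilinear-commutator}. Your explicit sign bookkeeping for $BA$ and the closing case split (both sites equal, both different, or exactly one different) spell out steps the paper's one-line proof leaves implicit.
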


\begin{proof}
Expand $AB-BA$ with $P_aP_c=\delta_{ac}I+i\varepsilon_{ace}P_e$ on each
qubit; the double-Levi--Civita terms cancel, leaving
\eqref{eq:bilinear-commutator}.
\end{proof}

\begin{corollary}[Successive pair of weight-2 rotations]
\label{cor:successive-pair}
For all angles $\theta_1,\theta_2$ and distinct nontrivial $A=P_a\otimes
P_b$, $B=P_c\otimes P_d$, the product $e^{-i\theta_1A}\,e^{-i\theta_2B}$
lies on the $c_3=0$ boundary and needs at most two CNOTs.
\end{corollary}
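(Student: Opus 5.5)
The plan is to split on whether $A$ and $B$ commute, reduce each case to a structure whose CNOT cost is already known, and read off the $c_3=0$ conclusion. By \Cref{lem:bilinear-commutator}, two distinct weight-$2$ bilinears $A=P_a\otimes P_b$ and $B=P_c\otimes P_d$ fall into exactly two regimes. Either they differ on both qubits ($a\neq c$ and $b\neq d$), in which case both Levi-Civita terms survive with opposite local supports, the two anticommutation signs cancel, and $[A,B]=0$. Or they share exactly one letter (say $a=c$, $b\neq d$, or symmetrically $b=d$, $a\neq c$), in which case they anticommute. Identical letters on both qubits are excluded by distinctness.

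\emph{Commuting case.} Here $e^{-i\theta_1A}e^{-i\theta_2B}=e^{-i(\theta_1A+\theta_2B)}$ exactly. The interaction matrix $J$ of \eqref{eq:interaction-matrix} has just two nonzero entries, $J_{ab}=\theta_1$ and $J_{cd}=\theta_2$, and they lie in distinct rows and distinct columns. Hence $\operatorname{rank}(J)\le 2$, and \Cref{thm:interaction-rank-cnot}, via \Cref{lem:proper-svd-cartan}, places the product on the $s_3=0$ (i.e.\ $c_3=0$) boundary with at most two CNOTs.

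\emph{Anticommuting case.} Without loss of generality take $a=c$; the case $b=d$ follows by swapping the qubits. Write $\Pi_\pm=(I\pm P_a)/2$. Since $P_a$ is an involution,
\begin{align}
e^{-i\theta P_a\otimes Q}=\Pi_+\otimes e^{-i\theta Q}+\Pi_-\otimes e^{i\theta Q}.
\end{align}
Multiplying the two factors gives
\begin{align}
e^{-i\theta_1A}e^{-i\theta_2B}=\Pi_+\otimes V+\Pi_-\otimes W,
\end{align}
with
\begin{align}
V=e^{-i\theta_1P_b}e^{-i\theta_2P_d},\qquad W=e^{i\theta_1P_b}e^{i\theta_2P_d}.
\end{align}
Factoring out $I\otimes V$ yields $(I\otimes V)\bigl(\Pi_+\otimes I+\Pi_-\otimes V^\dagger W\bigr)$. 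The bracketed operator is a controlled single-qubit unitary whose control basis is the eigenbasis of $P_a$; a local rotation on qubit $1$ maps this to the standard $Z$-controlled-$U$.

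It remains to show that a controlled-$U$ lies on the $c_3=0$ boundary. Diagonalise $V^\dagger W=e^{i\phi}\,w\,e^{-i\lambda Z}w^\dagger$ with a local $w$ on qubit $2$. The controlled gate then becomes, up to local gates and a phase on the control line, $e^{i(\lambda/2)Z\otimes Z}$. This has a single nonzero Cartan coordinate, which is stronger than $c_3=0$, and the standard ABC decomposition realises it with at most two CNOTs.

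I expect the main obstacle to be this last identification: checking carefully that the control-dependent global phase $e^{i\phi}$ is absorbed into a single-qubit phase gate on the control, and that the Weyl-chamber representative indeed has $c_2=c_3=0$. A cross-check that avoids Cartan bookkeeping is to note that every operator in the Lie algebra generated by $A$ and $B$ lies in $\mathrm{span}\{P_a\otimes P_b,\,P_a\otimes P_d,\,I\otimes P_f\}$. Its nonlocal part therefore has an interaction matrix supported on the single row $a$, so it has rank $1$, consistent with the controlled-gate picture.
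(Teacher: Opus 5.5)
Your proof is correct and follows the paper's route: the same case split from \Cref{lem:bilinear-commutator}, the same rank-two interaction matrix plus \Cref{thm:interaction-rank-cnot} when $a\neq c$ and $b\neq d$, and the same reduction to a $P_a$-controlled single-qubit unitary with Weyl coordinates $(c_1,0,0)$ when $a=c$, where your explicit $\Pi_\pm$ factorization and diagonalization of $V^\dagger W$ make concrete what the paper reads off from the Lie algebra generated by $A$ and $B$. Two small points: in the commuting case both terms of \eqref{eq:bilinear-commutator} vanish through the Kronecker deltas (nothing ``survives''), and your closing rank-one cross-check is only heuristic, since $I\otimes P_f$ does not commute with the nonlocal part and \Cref{thm:interaction-rank-cnot} concerns purely nonlocal $H_J$---it is your controlled-gate computation that actually carries the proof.
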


\begin{proof}
If $a\ne c$ and $b\ne d$, \Cref{lem:bilinear-commutator} gives $[A,B]=0$, so
the product equals $e^{-i(\theta_1A+\theta_2B)}$, whose interaction matrix
has rank two, and \Cref{thm:interaction-rank-cnot} applies.  If instead
$a=c$ (the case $b=d$ is symmetric), the lemma shows that the Lie algebra
generated by $A$ and $B$ lies in
$P_a\otimes\operatorname{span}\{X,Y,Z\}\oplus
I\otimes\operatorname{span}\{X,Y,Z\}$, whose group consists of
$P_a$-controlled single-qubit unitaries, each locally equivalent to a
controlled-$R_z$ with Weyl coordinates $(c_1,0,0)$.
\end{proof}

\begin{takeaways}[title={Takeaways}]
    \begin{itemize}[leftmargin=*, topsep=2pt, itemsep=2pt, parsep=2pt]
        \item Two-qubit Pauli evolution with interaction rank at most two remains in the $2$-CNOT class.
        \item Two-qubit Pauli evolution with full-rank interaction matrices generically produce genuine 3-CNOT unitaries.
        \item Noncommutativity between weight-2 Pauli terms is purely local; hence any two successive weight-2 Pauli rotations, commuting or not, stay on the $c_3=0$ boundary and never require three CNOTs (\Cref{cor:successive-pair}).
    \end{itemize}
\end{takeaways}

\subsection{Structural and Graph-Theoretic Criteria}

\begin{lemma}[Axis-span criterion]
\label{lem:axis-span-criterion}
Let
\begin{align}
    L=\operatorname{span}\{\sigma_\mu:\exists\nu,\;J_{\mu\nu}\ne0\},\\
    R=\operatorname{span}\{\sigma_\nu:\exists\mu,\;J_{\mu\nu}\ne0\}.
\end{align}
If $\dim L\leq2$ or $\dim R\leq2$, then
$\operatorname{rank}(J)\leq2$, and hence $e^{-iH_J}$ has a two-CNOT
implementation up to local gates.
\end{lemma}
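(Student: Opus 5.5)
The plan is to show that the column space (or row space) of $J$ has dimension at most two. The theorem is then immediate: \Cref{thm:interaction-rank-cnot} turns $\operatorname{rank}(J)\le 2$ into the two-CNOT bound. The key observation is that $L$ and $R$ are not really spans of operators but coordinate subspaces of $\mathbb{R}^3$.

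Identify $\operatorname{span}\{X,Y,Z\}$ with $\mathbb{R}^3$ via the ordered basis $(X,Y,Z)$, as in \eqref{eq:interaction-matrix-expanded}. Then $L$ is the span of the standard basis vectors $e_\mu$ whose row index $\mu$ indexes a nonzero row of $J$. Likewise, $R$ is the span of the $e_\nu$ indexing nonzero columns of $J$.

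First, every column of $J$ is the vector $(J_{X\nu},J_{Y\nu},J_{Z\nu})^T$, and its only nonzero coordinates sit in rows $\mu$ with $J_{\mu\nu}\neq 0$. Each such column therefore lies in the coordinate subspace $L$, so $\operatorname{col}(J)\subseteq L$. This gives $\operatorname{rank}(J)=\dim\operatorname{col}(J)\le\dim L$. Second, the same argument applied to $J^T$ shows that every row lies in $R$, so $\operatorname{rank}(J)=\dim\operatorname{row}(J)\le\dim R$.

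Combining the two bounds gives $\operatorname{rank}(J)\le\min(\dim L,\dim R)$. Hence either hypothesis $\dim L\le 2$ or $\dim R\le 2$ forces $\operatorname{rank}(J)\le 2$. Invoking \Cref{thm:interaction-rank-cnot}, or more concretely \Cref{lem:proper-svd-cartan} with $s_3=0$, then places $e^{-iH_J}$ on the $c_3=0$ boundary. Its two-CNOT implementation up to local gates follows.

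There is no real obstacle here. The only point requiring care is to read "span of $\sigma_\mu$" as a coordinate subspace, so that dimension equals the number of distinct Pauli axes used on that qubit. A general subspace reading would not be preserved under the local $SO(3)$ action, but that action is not needed for this argument.
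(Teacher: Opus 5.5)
Your proposal is correct and follows essentially the same route as the paper. The paper also bounds $\operatorname{rank}(J)$ by the number of nonzero rows (indexed by $L$) and by the number of nonzero columns (indexed by $R$), then invokes \Cref{thm:interaction-rank-cnot}. Your column-space-in-a-coordinate-subspace phrasing is just a slightly more explicit version of that counting argument.
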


\begin{proof}
The nonzero rows of $J$ are indexed by the axes in $L$, while the nonzero
columns are indexed by the axes in $R$.  Thus $\operatorname{rank}(J)$ is at
most the number of participating rows and also at most the number of
participating columns.  Either dimension condition bounds the rank by two;
the CNOT conclusion follows from \Cref{thm:interaction-rank-cnot}.
\end{proof}

\begin{corollary}[Determinant test]
\label{cor:determinant-test}
If $\det J\ne0$, then $\operatorname{rank}(J)=3$ and the block is generically
in the three-CNOT class.
\end{corollary}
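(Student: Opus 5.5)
The statement to prove is \Cref{cor:determinant-test}: if $\det J\neq0$, then $\operatorname{rank}(J)=3$, and the block is generically in the three-CNOT class. This is immediate from standard linear algebra plus the full-rank direction of \Cref{thm:interaction-rank-cnot}, so I would argue in two short steps.

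\textbf{Step 1: full rank.} For a real $3\times3$ matrix, $\det J\neq0$ is equivalent to invertibility, hence to $\operatorname{rank}(J)=3$. I would also tie this to the proper SVD of \Cref{lem:proper-svd-cartan}. Taking determinants of $R_1JR_2^T=\operatorname{diag}(s_1,s_2,s_3)$ with $\det R_1=\det R_2=1$ gives
\begin{align}
    \det J=s_1s_2s_3 .
\end{align}
So $\det J\neq0$ holds if and only if all signed singular values are nonzero, in particular $s_3\neq0$. This identity has a useful side effect: the sign of $\det J$ is exactly the sign absorbed into $s_3$ in that lemma, so the test is consistent with the Cartan canonical form.

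\textbf{Step 2: CNOT class.} With $\operatorname{rank}(J)=3$, I would invoke the full-rank branch of \Cref{thm:interaction-rank-cnot}. It says $U_J$ is locally equivalent to $e^{-i(s_1XX+s_2YY+s_3ZZ)}$ with all $s_i\neq0$. Away from the measure-zero set of coefficients whose Weyl-chamber folding sends a coordinate to the lower stratum (for instance $|s_i|$ equal to a multiple of $\pi/2$), all three Weyl coordinates are nonzero, which is the generic three-CNOT class.

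The only subtle point is the word \emph{generically}. Nonzero determinant does not by itself exclude the folding degeneracies caused by periodicity of the exponential. I would therefore state the conclusion exactly with the same genericity qualifier the theorem uses, rather than attempt a sharper claim. The remainder is a direct citation.
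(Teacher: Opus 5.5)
Your proposal is correct and follows the route the paper takes implicitly (the corollary is stated without a separate proof): $\det J\neq 0$ is equivalent to $\operatorname{rank}(J)=3$, and the full-rank branch of \Cref{thm:interaction-rank-cnot} then places the block in the generic three-CNOT class, under the same measure-zero folding caveat. Your extra identity $\det J=s_1s_2s_3$, obtained from \Cref{lem:proper-svd-cartan}, is a harmless consistency check that shows directly that $s_3\neq 0$.
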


\begin{proposition}[Support-graph generic-rank criterion]
\label{prop:support-graph-rank}
Suppose that the nonzero entries of $J$ carry algebraically independent
symbolic coefficients.  Form a bipartite graph with left and right vertex
sets both equal to $\{X,Y,Z\}$, and include edge $(\mu,\nu)$ exactly when
$J_{\mu\nu}$ is nonzero.  The generic rank of $J$ equals the maximum matching
number of this graph.
\end{proposition}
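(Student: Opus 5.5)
This is the classical fact that the generic rank of a matrix with independent indeterminate entries on a prescribed pattern equals the maximum matching of its bipartite support graph (Edmonds; Frobenius--König). I would prove it by establishing the two inequalities separately.

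\emph{Upper bound.} Let the maximum matching number be $\nu$. The rank of $J$ is the largest $k$ for which some $k\times k$ minor is nonzero. Expand any $k\times k$ minor by the Leibniz formula: each monomial $\prod_{i} J_{\mu_i\nu_{\pi(i)}}$ with all factors nonzero corresponds exactly to a matching of size $k$ in the support graph. If $k>\nu$, no such matching exists, so every monomial contains a zero entry and the minor vanishes identically. Hence $\operatorname{rank}(J)\le\nu$ for every specialization of the coefficients, and in particular generically.

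\emph{Lower bound.} Fix a maximum matching $M$ of size $\nu$, with row set $S$ and column set $T$, and consider the $\nu\times\nu$ minor $\det J_{S,T}$ as a polynomial in the symbolic nonzero entries. In its Leibniz expansion, distinct permutations produce distinct monomials, because each monomial is a product of distinct variables indexed by its permutation. No cancellation can therefore occur, and the monomial corresponding to $M$ appears with coefficient $\pm1$. So $\det J_{S,T}$ is a nonzero polynomial. Algebraic independence of the coefficients means that no nonzero polynomial relation holds among them, so the minor evaluates to a nonzero value and $\operatorname{rank}(J)\ge\nu$. Equivalently, the rank equals $\nu$ outside the proper algebraic variety $\{\det J_{S,T}=0\}$, which is exactly what ``generic rank'' means.

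The only delicate step is the no-cancellation argument in the lower bound. It relies on each nonzero entry being its own independent indeterminate. If two entries shared a coefficient, monomials from different permutations could coincide and cancel, so this is precisely where the hypothesis of the proposition is used. Combined with \Cref{thm:interaction-rank-cnot}, the result then gives a purely combinatorial two- versus three-CNOT test: a block is generically in the three-CNOT class exactly when its support graph has a perfect matching. This subsumes \Cref{lem:axis-span-criterion}, since by K\"onig's theorem a matching number of at most two corresponds to a vertex cover of at most two axes.
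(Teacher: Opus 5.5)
Your proof is correct and follows essentially the same route as the paper's: both expand minors by the Leibniz formula, identify nonvanishing permutation monomials with matchings in the support graph, and rely on the coefficients being independent so that no cancellation occurs. You simply write out the two inequalities separately, and you more cleanly distinguish two steps that the paper's one-line argument merges: the structural no-cancellation (each position is its own variable) and the use of algebraic independence to pass from a nonzero polynomial to a nonzero value.
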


\begin{proof}
A $k\times k$ minor has a nonzero determinant polynomial exactly when its
support contains a matching of size $k$: a matching supplies a nonvanishing
permutation monomial, and algebraic independence precludes its cancellation.
The largest such $k$ is both the generic matrix rank and the maximum matching
number.
\end{proof}

Thus a maximum matching of size at most two guarantees the two-CNOT upper
bound for every coefficient choice.  A perfect matching instead guarantees
full rank for generic coefficients and hence the generic three-CNOT class.

\end{document}